\newif\iflong
\newif\ifold\oldfalse
\newcommand{\G}{\ensuremath{\mathcal{G}}}
\newcommand{\prob}[3]{
\begin{definition}[\textsc{#1}]\ \\
	Input: #2\\
	Question: #3
\end{definition}
}
\renewcommand{\prob}[3]{\begin{quote}  \textsc{#1}\\  \textbf{Input:} #2\\  \textbf{Question:} #3\end{quote}}
\title{Distance to Transitivity: New Parameters for Taming Reachability in Temporal Graphs} 
\titlerunning{Distance to Transitivity: Parameters for Taming Reachability in Temporal Graphs.} 
\author{Arnaud Casteigts}{Department of Computer Science, University of Geneva, Switzerland}{arnaud.casteigts@unige.ch}{https://orcid.org/0000-0002-7819-7013}	{supported by French ANR, project ANR-22-CE48-0001 (TEMPOGRAL).}
\author{Nils Morawietz}{Friedrich Schiller University Jena, Institute of Computer Science, Germany}{nils.morawietz@uni-jena.de}{https://orcid.org/0000-0002-7283-4982}{}
\author{Petra Wolf}{LaBRI, CNRS, Université de Bordeaux, Bordeaux INP, France \and 
		\url{https://www.wolfp.net/}}
	{mail@wolfp.net}
	{https://orcid.org/0000-0003-3097-3906}
	{supported by French ANR, project ANR-22-CE48-0001 (TEMPOGRAL).}
\authorrunning{A. Casteigts, N. Morawietz, and P. Wolf}
\keywords{Temporal graphs, Parameterized complexity, Reachability, Transitivity.}
\newcommand{\tvd}{\ensuremath{\delta_{\mathrm{vd}}}\xspace}
\newcommand{\taa}{\ensuremath{\delta_{\mathrm{aa}}}\xspace}
\newcommand{\FPT}{\textsf{FPT}\xspace}
\newcommand{\NP}{\textsf{NP}\xspace}
\newcommand{\tae}{\ensuremath{\delta_{\mathrm{am}}}\xspace}
\newcommand{\coNPpoly}{\textsf{coNP/poly}\xspace}
\newcommand{\W}{{\textsf{{W[1]}}}\xspace}
\newcommand{\bth}{$\NP\subseteq\coNPpoly$}
\newcommand{\Oh}{\mathcal{O}}
\newcommand{\CL}{\textsc{Clique}\xspace}
\newcommand{\OTCC}{\textsc{Open-TCC}\xspace}
\newcommand{\CTCC}{\textsc{Closed-TCC}\xspace}
\newcommand{\iin}[1][v]{{#1}_{\mathrm{in}}}
\newcommand{\mg}{\mathcal{G}}
\newcommand{\wg}{\widetilde{\mg}}
\newcommand{\vc}{\mathrm{vc}}
\newcommand{\omin}{\mathrm{out}^{\min}}
\newcommand{\imax}{\mathrm{in}^{\max}}
\begin{document}
\maketitle
\begin{abstract}
  A temporal graph is a graph whose edges only appear at certain
  points in time. Reachability in these graphs is defined in terms of
  paths that traverse the edges in chronological order (temporal
  paths). This form of reachability is neither symmetric nor
  transitive, the latter having important consequences on the
  computational complexity of even basic questions, such as computing
  temporal connected components. In this paper, we introduce several
  parameters that capture how far a temporal graph~$\mathcal{G}$ is
  from being transitive, namely, \emph{vertex-deletion distance to
    transitivity} and \emph{arc-modification distance to transitivity}, both
  being applied to the reachability graph of $\mathcal{G}$. We
  illustrate the impact of these parameters on the temporal connected
  component problem, obtaining several tractability results in terms
  of fixed-parameter tractability and polynomial kernels.
  Significantly, these results are obtained without restrictions of
  the underlying graph, the snapshots, or the lifetime of the input
  graph. As such, our results isolate the impact of non-transitivity
  and confirm the key role that it plays in the hardness of
  temporal graph problems.
\end{abstract}

\newpage
\section{Introduction}

Temporal graphs have gained attention lately as appropriate tools to
capture time-dependent phenomena in fields as various as
transportation, social networks analysis, biology, robotics,
scheduling, and distributed computing. On the theoretical side, these
graphs generate interest mostly for their intriguing features. Indeed,
many basic questions are still open, with a general feeling that
existing techniques from graph theory typically fail on temporal
graphs. In fact, most of the natural questions considered in static
graphs turn out to be intractable when formulated in a temporal
version, and likewise, most of the temporal analogs of classical
structural properties are false.

One of the earliest examples is that the natural analog of Menger's
theorem does not hold in temporal graphs~\cite{KKK00}. Another early
result is that deciding if a temporal connected component (set of vertices that
can reach each other through temporal paths) of a certain size exists
is NP-complete~\cite{BF03}. A more recent and striking result is that
there exist temporally connected graphs on $\Theta(n^2)$ edges in
which every edge is critical for connectivity; in other words, no
temporal analog of sparse spanners exist unconditionally~\cite{AF16}
(though they do, probabilistically~\cite{CMMZ21}). Moreover, minimizing
the size of such spanners is APX-hard~\cite{AGMS17,AF16}. Further
hardness results for problems whose static versions are generally
tractable include separators~\cite{FMN+20}, connectivity
mitigation~\cite{EMMZ19}, exploration~\cite{ES22,AFGW23},
flows~\cite{ACG+19}, Eulerian paths~\cite{BM23}, and even spanning
trees~\cite{CC24}.

Faced by these difficulties, the algorithmic community has focused on
special cases, and tools from parameterized complexity were employed
with moderate success. A natural approach here is to apply the range
of classical graph parameters to restrict either the underlying graph
of the temporal graph (i.e. which edges can exist at all) or its
snapshots (i.e. which edges may exist simultaneously). For example,
finding temporal paths with bounded waiting time at each node (which is
NP-hard in general) turns out to be FPT when parameterized by
treedepth or vertex cover number of the underlying graph. But the
problem is already W[1]-hard for pathwidth (let alone
treewidth)~\cite{CHMZ21}. 
In fact, as observed in~\cite{treewidth}, most
temporal graphs problems remain hard even when the underlying graph
has bounded treewidth (sometimes, even a tree or a
star~\cite{EMMZ19,AMSR21,AFGW23}).

A possible explanation for these results is that temporal graph
problems are \emph{very} hard. Another one is that parameters based on
static graph properties are not adequate. Some
parameters whose definition is based on that of a temporal
graph include timed feedback vertex sets (counting the cumulative
distance to trees over all snapshots)~\cite{CHMZ21} and the
$p(\mathcal{G})$ parameter from~\cite{AFGW23}, that measures in a
certain way how dynamic the temporal graph is and enables polynomial
kernels for the exploration problem. While these parameters
represent some progress towards finer-grained restrictions,
they remain somewhat structural in
the sense that their definition is stable under re-shuffling of the
snapshots.

A key aspect of temporal graphs is that the \emph{ordering of events}
matters. Arguably, a truly temporal parameter should be sensitive to
that. An interesting step in this direction was recently made by
Bumpus and Meeks~\cite{BM23}, introducing interval-membership-width, a
parameter that quantifies the extent to which the set of intervals
defined by the first and last appearance of an edge at each vertex can
overlap (with application to Eulerian paths). In a sense, this
parameter measures how complex the interleaving of events could be.
Another, perhaps even more fundamental feature of temporal graphs is
that the reachability relation based on temporal paths is not guaranteed to be
symmetric or transitive. While the former is a well-known limitation
of directed graphs, the latter is specific to temporal graphs
(directed or not), and it has been suspected to be one of the main
sources of intractability since the onset of the theory. (Note that a
temporal graph of bounded interval-membership-width may still be
arbitrarily non-transitive.) In the present work, we explore new
parameters that control how transitive a temporal graph is, thereby
isolating, and confirming, the role that this aspect plays in the
tractability of temporal reachability problems.

\subparagraph{Our Contributions.}
We introduce and investigate two parameters that measure how far a
temporal graph is from having transitive reachability. For a temporal
graph \G, our parameters directly address the reachability features of
$\G$, and as such, they are formulated in terms of its reachability
graph $G_R=(V,\{(u,v) : u \leadsto v\})$, a directed graph whose arcs
represent the existence of temporal paths in $\G$, whether $\G$ itself
is directed or undirected. Indeed, the reachability of $\G$ is
transitive if and only if the arc relation of~$G_R$ is transitive. Two natural ways of
measuring this distance are in terms of \emph{vertex deletion} and
\emph{arc modification}, namely:
\begin{itemize}
\item \emph{Vertex-deletion distance to transitivity} (\tvd) is the minimum number of vertices
  whose deletion from $G_R$ makes the resulting graph transitive.
\item \emph{Arc-modification distance to transitivity} (\tae) is the minimum number
of arcs whose addition or deletion from $G_R$ makes the resulting
graph transitive.
\end{itemize}
Observe that vertex addition does not appear to make particular sense.
As for the arc-modification distance, we may occasionally consider its
restriction to arc-addition only (\taa).

Among the many problems that were shown intractable in temporal
graphs, one of the first, and perhaps most iconic one, is the
computation of temporal connected components~\cite{BF03} (see
also~\cite{CLMS23,SSBP24}). In order to benchmark our new parameters,
we investigate their impact on the computational complexity of this
problem. Informally, given a temporal graph $\G$ (defined later) on a
set of vertices $V$, a temporal connected component is a subset
$V' \subseteq V$ such that for all $u$ and $v$ in $V'$, $u$ can reach
$v$ by a temporal path (shorthand, $u \leadsto v$). Interestingly, the
non-transitive nature of reachability here makes it possible for such
vertices to reach each other through temporal paths that travel
outside the component, without absorbing the intermediate vertices into
the component. This gives rise to two distinct notions of components:
\emph{open} temporal connected components (\OTCC) and \emph{closed}
temporal connected components (\CTCC), the latter requiring that only
internal vertices are used in the temporal paths, and both being
NP-hard to compute.

The statement of our results requires a few more facts. Both
algorithmic and structural results in temporal graphs are highly
sensitive to subtle definitional variations, called \emph{settings}.
In the \emph{non-strict} setting, the labels along a temporal path are
only required to be non-decreasing, whereas in the \emph{strict}
setting, they must be increasing. It turns out that both settings are
sometimes incomparable in difficulty, and the techniques developed for
each may be different. Some temporal graphs, called \emph{proper},
have the property that no two adjacent edges share a common time
label, making it possible to ignore the distinction between strict and
non-strict temporal paths. Whenever possible, hardness results should
preferably be obtained for proper temporal graphs, so that they
apply in both settings at once. Finally, with a few exceptions, our
results hold for both directed and undirected temporal graphs.

Bearing these notions in mind, our results are the following. 
For
\OTCC, we obtain an FPT algorithm with parameter \tvd, running in time
$3^{\tvd}\cdot n^{\Oh(1)}$ (in all the settings). 
Unfortunately, \tvd
turns out to be too small for obtaining a polynomial size kernel. 
In
fact, we show that under reasonable computational complexity
assumptions, no polynomial kernel in $\tvd + \vc + k$ exists (except
possibly for the non-strict undirected setting), where~$k$ denotes the size of the sought tcc and where~$\vc$ denotes the vertex cover number of the underlying graph. 
Next, we obtain an
FPT algorithm running in time $4^{\tae}\cdot n^{\Oh(1)}$ for the
mostly larger parameter \tae, and show that \OTCC admits a polynomial kernel
of size $|M|^3$, where~$M$ is a given arc set for which~$(V,A(G_R) \Delta M)$ is transitive. 
It also admits a polynomial kernel of size $\taa^2$ when
restricting modification to addition-only (again, all these results hold in
all the settings). 
\CTCC, in comparison, seems to be a harder problem,
at least with respect to our parameters. 
In particular, we show that
it remains NP-hard even if $\tae=\tvd=1$ in all the settings (through
proper graphs). 
It is also W[1]-hard when parameterized by~$\tvd+\tae+k$ in all the settings,
except possibly in the non-strict undirected setting. In fact, these
two results hold even for temporal graphs whose reachability graph
misses a single arc towards being a bidirectional clique.

Put together, these results establish clearly that non-transitivity is
a genuine source of hardness for \OTCC. The case of \CTCC is less
clear. On the one hand, the parameters do not suffice to make this
particular version of the problem tractable. This is not so
surprising, as the reachability graph itself does not encode which
paths are responsible for reachability, in particular, whether these
paths are internal or external in a component. On the other hand, this
gives us a separation between both versions of the problem and
provides some support for the fact that \CTCC may be harder than
\OTCC, which was not known before. Finally, the negative results for
\CTCC can serve as a landmark result for guiding future efforts in
defining transitivity parameters that exploit more sophisticated
structures than the reachability graph.

\subparagraph{Organization of the Work.}
The main definitions are given in~\cref{sec:preliminaries}. Then, we investigate each parameter in a dedicated section (\tvd in~\cref{sec:tvd} and \tae in~\cref{sec:tae}). The limitations of these parameters in the case of \CTCC are presented in~\cref{sec:limits}. Finally, \cref{sec:conclusion} concludes the paper with some remarks and open questions.
\iflong\else Due to space limitations, the proofs of statements marked with~$(\star)$ are deferred to Appendix.\fi

\section{Preliminaries}
\label{sec:preliminaries}
For concepts of parameterized complexity, like FPT, \W-hardness, and polynomial kernels, we refer to the standard monographs~\cite{DF13,C+15}. A reduction $g$ between two parameterized problems is called a \emph{polynomial parameter transformation}, if the reduction can be computed in polynomial time and, if for every input instance $(I, k)$, we have that $(I', k') = g(I,k)$ with~$k' \in k^{\mathcal{O}(1)}$.
We call a polynomial time reduction from a problem $P$ to $P$ itself a \emph{self-reduction}.

\medskip
\noindent
\textbf{Notation.}
Let $n$ be a positive integer, we denote with $[n]$ the set $\{1, 2, \dots, n\}$. For a decision problem $P$, we say that two instances $I_1, I_2$ of~$P$ are~\emph{equivalent} if $I_1$ is a yes-instance of $P$ if and only if $I_2$ is a yes-instance of $P$. For two sets $A$ and $B$, we denote with $A \Delta B$ the symmetric difference of $A$ and $B$.

\medskip
\noindent
\textbf{Graphs.}
We consider a graph $G = (V, E)$ to be a static graph. If not indicated otherwise, we assume $G$ to be undirected.
Given a (directed) graph~$G$, we denote by $V(G)$ the set of vertices of $G$, by  $E(G)$ (respectively, $A(G)$) the set of edges (arcs) of $G$.
Let $G = (V, E)$ be a graph a let $X \subseteq V(G)$ be a set of vertices.
We denote by~$E_G(X) = \{\{u,v\}\in E\mid u\in X, v\in X\}$ the edges in~$G$ between the vertices of~$S$.
Moreover, we define the following operations on~$G$:
$G[X] = (X, E_G[X])$, $G - X = G[V \setminus X]$. 
We call a sequence $\rho = v_0, v_1, \ldots, v_r$ of vertices  a \emph{path} in graph~$G$ if $v_0,\ldots,v_r\in V(G)$ and for $i \in [r]$, $\{v_{i-1}, v_i\} \in E(G)$. 
We denote with $N_G[v]$ the closed neighborhood of the vertex $v\in V(G)$.
A vertex set~$S\subseteq V$ is a~\emph{clique} in an undirected graph, if each pair of vertices in~$S$ is adjacent in~$G$. 
For a directed graph $G = (V, A)$, we call a set $S \subseteq V$ a \emph{bidirectional clique}, if for every pair of distinct vertices $u, v$ in $S$, we have $(u,v) \in A$ and $(v, u) \in A$.
Let~$G = (V, A)$ be a directed graph.
A~\emph{strongly connected component (scc)} in~$G$ is an inclusion maximal vertex set~$S\subseteq V$ under the property that there is a directed path in~$G$ between any two vertices of~$S$.
For each directed graph~$G$, there is a unique partition of the vertex set of~$G$ into sccs.
Moreover, this partition can be computed in polynomial time~\cite{T72}.

\medskip
\noindent
\textbf{Temporal graphs.}
A \emph{temporal graph} $\mathcal{G}$ over a set of vertices $V$ is a sequence $\mathcal{G} = (G_1, G_2, \dots, G_L)$ of graphs such that for all $t \in [L], V(G_t) = V$. We call $L$ the \emph{lifetime} of $\mathcal{G}$ and for $t \in [L]$, we call $G_t = (V, E_t)$ the \emph{snapshot graph} of $\mathcal{G}$ at \emph{time step} $t$. 
We might refer to $G_t$ as $\mathcal{G}(t)$. 
We call $G = (V, E)$ with $E = \bigcup_{t \in [L]} E_t$ the \emph{underlying graph} of $\mathcal{G}$.
We denote by $V(\mathcal{G})$ the set of vertices of $\mathcal{G}$. We write $V$ if the graph or temporal graph
is clear from context.
We call an undirected temporal graph $\mathcal{G} = (G_1, G_2, \dots, G_L)$ \emph{proper}, if for each vertex $v\in V(\mathcal{G})$ the degree of $v$ in $G_t$ is one, for each $t \leq L$. We call a directed temporal graph $\mathcal{G} = (G_1, G_2, \dots, G_L)$ \emph{proper}, if for each vertex $v\in V(\mathcal{G})$ either the out-degree or the in-degree of $v$ in $G_t$ is zero, for each $t \leq L$.
We further call a (directed) temporal graph $\mathcal{G}$ \emph{simple}, if each edge (arc) exists in exactly one snapshot.
We call a sequence of vertices $v_0, v_1, \ldots, v_r$ that form a path in the underlying graph~$G$ of~$\mathcal{G}$ a \emph{strict (non-strict) temporal path} in $\mathcal{G}$ if for each $i \in [r]$, there exists an $j_i \in [L]$ such that $\{v_{i-1}, v_i\} \in E(G_{j_i})$ and the sequence of indices $j_i$ is increasing (non-decreasing).

For a temporal graph $\mathcal{G}$, we say that a vertex $u \in V$ \emph{strictly (non-strictly) reaches} a vertex $v \in V$ if there is a strict (non-strict) temporal path from $u$ to $v$, i.e., with $v_0 = u$ and $v_r = v$.
We define the \emph{strict (non-strict) reachability relation} $R \subseteq V \times V$ as: for all $u, v \in V$, $(u, v) \in R$ if and only if $u$ strictly (non-strictly) reaches $v$. We call the directed graph $G_R = (V, R)$ the \emph{strict (non-strict) reachability graph} of $\mathcal{G}$. We say that $G_R$ is transitive, resp. symmetric, if and only if $R$ is. 
More generally, we say that a directed graph $G$ is \emph{transitive}, if its set of arcs forms a transitive relation.
For a directed graph $G = (V, A)$ we call a set of vertices $S \subseteq V$ a \emph{transitivity modulator} if $G - S$ is transitive.

\begin{observation}
	\label{lem:trans-closed-delete}
	Let $G$ be a transitive directed graph. Then, for any $v \in V(G)$, $G[V \setminus \{v\}]$ is also transitive.
\end{observation}

Next we define our main problems of interest in this work, finding open and closed temporal connected components.
\prob{\textsc{Open Temporal Connected Component (\OTCC)}}
{Temporal graph $\mathcal{G} = (G_1, G_2, \dots, G_L)$ and integer $k$.}
{Does there exists an open temporal connected component of size at least $k$, i.e., a subset $C\subseteq V(\mathcal{G})$ with $|C| \geq k$, such that for each $u, v \in C$, $u$ reaches $v$, and vice versa.}
We differentiate between the strict vs.~non-strict and directed vs.~undirected version of \OTCC depending on whether we consider strict vs.~non-strict reachability and directed vs.~undirected temporal graphs.
We define the problem \textsc{Closed Temporal Connected Component (\CTCC)} similarly with the additional restriction that at least one path over which $u$ reaches $v$ is fully contained in $C$. 
We abbreviate a temporal connected component as tcc.

\medskip
\noindent
\textbf{Distance to transitivity.} We introduce two parameters that measure how far the reachability graph $G_R=(V,A)$ of a temporal graph is from being transitive. 
The first parameter, \emph{vertex-deletion distance to transitivity}, \tvd, counts how many vertices need to be deleted from $G_R$ in order to obtain a transitive reachability graph, i.e., the size of a minimum transitivity modulator. 
This parameter is especially suited for temporal graphs for which the reachability graph consists of cliques with small overlaps. 
The second parameter, \emph{arc-modification distance to transitivity}, \tae, counts how many arcs need to be added to or removed from $G_R$ in order to obtain a transitive reachability graph and is especially suited for directed temporal graphs or temporal graphs for which the reachability graph consists of cliques with large overlaps.
Formally, we define the parameters as follows. 
\begin{align*}
	\tvd &= \min_{S \subseteq V}(|S|)\text{ for which }G_R' = G_R-S\text{ is transitive.}\\
	\tae &= \min_{M \subseteq V \times V}(|M|)\text{ for which  }G_R' = (V, A \Delta M)\text{ is transitive.}	
\end{align*}
For $\tae$, we call the set $M$ an arc-modification set.
Note that $\tvd \leq 2\cdot \tae$, since the endpoints of an arc-modification set form a transitivity modulator.

\subsection{Basic Observations}
Next, we present basic observations that motivate the study of the considered parameters.
\begin{lemma}[\cite{BF03}]
	\label{lem:tcc-are-cliques}
	Let $\mathcal{G}$ be a temporal graph with reachability graph $G_R$. Then a set $S\subseteq V(\mathcal{G})$ is a tcc in $\mathcal{G}$ if and only if $S$ is a bidirectional clique in $G_R$.
\end{lemma}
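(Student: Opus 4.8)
The plan is to prove the equivalence by directly unfolding both definitions, since the reachability graph $G_R = (V, R)$ was defined precisely so that its arc set $A(G_R)$ equals the reachability relation $R$. Concretely, I would first recall that $(u,v) \in A(G_R)$ holds if and only if $u$ reaches $v$ in $\mathcal{G}$, in whichever setting (strict or non-strict, directed or undirected) has been fixed. With this in hand, the whole argument reduces to matching the quantifier structure of ``tcc'' against that of ``bidirectional clique''.

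For the forward direction, I would assume $S$ is a tcc, so that by definition $u$ reaches $v$ and $v$ reaches $u$ for every $u, v \in S$. Specializing to distinct $u \neq v$, this translates verbatim into $(u,v) \in A(G_R)$ and $(v,u) \in A(G_R)$, which is exactly the defining condition of a bidirectional clique. For the converse, I would assume $S$ is a bidirectional clique, so that $(u,v), (v,u) \in A(G_R)$ for all distinct $u, v \in S$; reading these arcs back as reachabilities yields mutual reachability for every distinct pair of $S$.

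The one point that needs a remark, and which I expect to be the only (minor) subtlety, is the mismatch between the two quantifications: the definition of a tcc ranges over all pairs $u, v \in C$ including $u = v$, whereas a bidirectional clique only constrains distinct pairs. This is reconciled by observing that reachability is reflexive --- every vertex reaches itself via the trivial length-$0$ temporal path consisting of that vertex alone --- so the diagonal case is automatic and contributes nothing to either condition. Consequently $S$ is a tcc if and only if $S$ is a bidirectional clique, and since the identification $A(G_R) = R$ holds identically in all four settings, so does the equivalence.
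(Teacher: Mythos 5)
Your proposal is correct and follows essentially the same route as the paper's proof: both directions are a direct unfolding of the definitions of a tcc and a bidirectional clique via the identification of $A(G_R)$ with the reachability relation. Your extra remark about the diagonal pairs $u=v$ being handled by reflexivity of reachability is a fine (if minor) point of care that the paper's proof silently skips over.
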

\iflong
\begin{proof}
	Assume $S$ is a tcc in $\mathcal{G}$. Then for each pair of distinct vertices $u, v \in V(\mathcal{G})$ we have that $u$ reaches $v$ and $v$ reaches $u$. As $G_R$ represents the reachability relation of $\mathcal{G}$, we hence have $(u, v) \in A(G_R)$ and $(v, u) \in A(G_R)$. Hence, $S$ forms a bidirectional clique in $R_G$.
	
	For the other direction, assume $S$ is a bidirectional clique in $G_R$. Hence, for each pair of distinct vertices $u, v\in V(G_R)$, we have that $(u, v) \in A(G_R)$ and $(v, u) \in A(G_R)$. As $R_G$ represents the reachability relation of $\mathcal{G}$, we have that $u$ reaches $v$ and $v$ reaches $u$ in $\mathcal{G}$. Hence, $S$ forms a temporal connected component in $\mathcal{G}$.
\end{proof}
\fi
\begin{lemma}
	\label{lem:scc-are-cliques-old}
	Let $G$ be a transitive directed graph. 
	Then every vertex set~$S\subseteq V(G)$ is a bidirectional clique in~$G$ if and only if each pair of vertices of~$S$ can reach each other.
	\end{lemma}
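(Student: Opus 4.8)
The plan is to prove the two directions of the equivalence separately, for an arbitrary fixed vertex set $S \subseteq V(G)$. The forward direction is immediate and does not even require transitivity: if $S$ is a bidirectional clique, then for any two distinct $u, v \in S$ the arcs $(u,v)$ and $(v,u)$ are present in $G$, and each such arc is itself a directed path of length one, so $u$ and $v$ reach each other; a vertex trivially reaches itself, which settles the degenerate case $u=v$.

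For the reverse direction, I would fix two distinct vertices $u, v \in S$ and exhibit the arcs $(u,v)$ and $(v,u)$ directly from the reachability assumption. By hypothesis there is a directed path $u = w_0, w_1, \ldots, w_r = v$ in $G$, possibly passing through vertices outside $S$. I would argue by induction on $r$ that $(w_0, w_r) \in A(G)$: the base case $r=1$ is the arc $(w_0, w_1)$ itself, and in the inductive step the induction hypothesis gives $(w_0, w_{r-1}) \in A(G)$, which together with $(w_{r-1}, w_r) \in A(G)$ yields $(w_0, w_r) \in A(G)$ by transitivity of $G$. Applying this to the path from $u$ to $v$ gives $(u,v) \in A(G)$, and applying it symmetrically to a path from $v$ to $u$ gives $(v,u) \in A(G)$. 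Since $u$ and $v$ were arbitrary distinct vertices of $S$, this shows that $S$ is a bidirectional clique.

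The only point that requires a little care is that the reachability paths may leave $S$; but since the \emph{whole} graph $G$ is transitive, the path-collapsing induction is valid over all of $V(G)$ and is unaffected by which vertices of the path lie inside or outside $S$. Apart from this, the argument is entirely routine, and I expect no genuine obstacle: the essential content is simply that transitivity allows one to shortcut an arbitrary directed path into a single arc, reducing reachability to adjacency.
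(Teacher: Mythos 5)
Your proof is correct and follows essentially the same route as the paper: the forward direction is the same one-line observation, and the reverse direction rests on the same key fact that transitivity collapses any directed path into a single arc (the paper phrases this contrapositively and leaves the path-collapsing step implicit, whereas you make the induction on path length explicit, which is if anything slightly more careful). No gap.
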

\iflong
\begin{proof}
	First, assume $S \subseteq V(G)$ is a bidirectional clique in $G$. Then, for each pair of vertices $u, v \in V(G)$ we have that $(u, v), (v,u) \in A(G)$ and, hence, $u$ and $v$ can reach each other.
	
	Next, assume $S \subseteq V(G)$ is not a bidirectional clique in $G$ and let $u, v$ be two vertices in $S$ for which $(u,v) \notin A(G)$. For the sake of contradiction, assume that each pair of vertices in $S$ can reach each other. Then, there exists a path $w_1, w_2, \dots w_\ell$ such that $w_1 = u$ and $w_\ell = v$. By the assumption that $G$ is transitive, this implies that $(u,v) \in G$; a contradiction.
\end{proof}
\fi
Note that this implies the following.

\begin{corollary}
	\label{lem:scc-are-cliques}
	Let $G$ be a transitive directed graph. 
	Then every scc in $G$ is also a maximal bidirectional clique and vice versa.  
\end{corollary}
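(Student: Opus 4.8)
The plan is to derive this directly from \cref{lem:scc-are-cliques-old}, which already provides the pointwise equivalence we need. Note that \cref{lem:scc-are-cliques-old} states, for a transitive directed graph $G$, that a vertex set $S \subseteq V(G)$ is a bidirectional clique \emph{if and only if} every pair of vertices in $S$ can reach each other. Phrased differently, the family of bidirectional cliques of $G$ and the family of vertex sets whose members are pairwise mutually reachable are \emph{identical} as collections of subsets of $V(G)$. Since an scc is by definition an inclusion-maximal set of pairwise mutually reachable vertices, and a maximal bidirectional clique is by definition an inclusion-maximal bidirectional clique, the corollary reduces to observing that these two coinciding families have the same inclusion-maximal elements.

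First I would spell out the ``scc $\Rightarrow$ maximal bidirectional clique'' direction. Let $S$ be an scc; then all vertices of $S$ pairwise reach each other, so by \cref{lem:scc-are-cliques-old}, $S$ is a bidirectional clique. For maximality, suppose toward a contradiction that $S \subsetneq S'$ for some bidirectional clique $S'$. Applying \cref{lem:scc-are-cliques-old} to $S'$ shows that every pair of vertices in $S'$ reaches each other, contradicting the inclusion-maximality of $S$ as an scc. Hence $S$ is a maximal bidirectional clique.

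The converse is entirely symmetric: starting from a maximal bidirectional clique $S$, \cref{lem:scc-are-cliques-old} yields that the vertices of $S$ pairwise reach each other, and a strictly larger pairwise-reachable set $S'$ would again be a bidirectional clique by \cref{lem:scc-are-cliques-old}, contradicting the maximality of $S$ as a bidirectional clique. Thus $S$ is inclusion-maximal for mutual reachability, i.e.\ an scc.

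I do not expect a genuine obstacle here: the entire content sits in \cref{lem:scc-are-cliques-old}, and the only thing to handle carefully is the bookkeeping of the two notions of \emph{maximality}, namely that two properties singling out the same subsets must also single out the same inclusion-maximal subsets. The one definitional subtlety worth double-checking is that ``scc'' is taken with respect to mutual reachability (a directed path in each direction between any two vertices), which matches exactly the reachability hypothesis of \cref{lem:scc-are-cliques-old}; once this alignment is confirmed, the result is a direct transfer of maximality across the equivalence.
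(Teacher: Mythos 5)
Your proposal is correct and matches the paper's intent exactly: the paper simply asserts that \cref{lem:scc-are-cliques} follows from \cref{lem:scc-are-cliques-old}, and your argument fills in the routine step of transferring inclusion-maximality across the equivalence of the two set families. No issues.
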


The previous observations thereby imply that both \OTCC and \CTCC can be solved in polynomial time on temporal graphs with transitive reachability graphs.

\section{Vertex-Deletion Distance to Transitivity}
\label{sec:tvd}
We first focus on the parameter $\tvd$. 
Note that computing this parameter is \NP-hard: In a strict temporal graph~$\mg$ with lifetime~1, the reachability graph~$G_R$ of~$\mg$ is exactly the directed graph obtained from orienting each edge of the underlying graph in both directions.
Hence, on such a temporal graph, computing~$\tvd$ is exactly the cluster vertex deletion number of the underlying graph, that is, the minimum size of any vertex set to remove, such that no induced path of length 2 remains.
Since computing the latter parameter is \NP-hard~\cite{DBLP:journals/jcss/LewisY80}, this hardness also translates to computing the parameter $\tvd$.

Moreover, note that computing this parameter can be done similarly to computing the cluster vertex deletion number of a graph:
If a directed graph~$G=(V,A)$ is not transitive, then there are vertices~$u,v,$ and~$w$ in~$V$, such that~$(u,v)$ and~$(v,w)$ are arcs of~$A$ and~$(u,w)$ is not an arc of~$A$.
Hence, each transitivity modulator for~$G$ has to contain at least one of the vertices~$u,v,$ or~$w$.
This implies, that a standard branching algorithm that considers each of these three vertices to be removed from the graph to obtain a transitive graph, finds a minimum size transitivity modulator in $3^{\tvd} \cdot n^{\Oh(1)}$~time. 
\begin{proposition}\label{compute tvd}
	Let $\mathcal{G}$ be a temporal graph with reachability graph $G_R$. 
	Then we can compute a minimal set $S \subseteq V(\mathcal{G})$ of size $\tvd$ such that $G_R[V \setminus S]$ is transitive in time $3^{\tvd}\cdot n^{\Oh(1)}$.
\end{proposition}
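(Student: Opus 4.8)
The plan is to combine a polynomial-time computation of the reachability graph with the standard bounded-search-tree procedure on non-transitivity witnesses sketched in the discussion preceding the statement. First I would compute $G_R = (V, A)$ from $\mathcal{G}$: for each vertex $u$, a single temporal traversal determines all vertices reachable from $u$ by a temporal path (in whichever of the strict/non-strict, directed/undirected settings is fixed), so the entire arc set $A$ can be built in polynomial time. Testing whether a directed graph is transitive is also polynomial: one checks, for every ordered triple, whether $(u,v) \in A$ and $(v,w) \in A$ imply $(u,w) \in A$, and if this fails we have found a \emph{violating triple} $(u,v,w)$.

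The algorithm itself is the recursive branching described before the statement. Given the current graph, we test transitivity; if it is transitive we stop, otherwise we pick any violating triple $(u,v,w)$ and recurse on the three graphs obtained by deleting $u$, deleting $v$, or deleting $w$, in each case decreasing the remaining budget by one. Correctness rests on the observation that any transitivity modulator $S$ must contain at least one of $u$, $v$, $w$: if none of them lay in $S$, then in $G_R - S$ the arcs $(u,v)$ and $(v,w)$ would survive while $(u,w)$ would still be absent (deleting vertices only removes arcs, never creates the missing one), so $G_R - S$ would not be transitive. Hence branching over the three choices is exhaustive, and by induction on the budget the procedure finds a transitivity modulator of size at most the budget whenever one exists.

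For the running time, each recursive call performs only polynomial work (finding a witness triple and recursing), the branching factor is $3$, and every branch strictly decreases the budget; running the procedure with budget $\tvd$ therefore yields a search tree with at most $3^{\tvd}$ leaves, for a total of $3^{\tvd} \cdot n^{\Oh(1)}$ time. To obtain a set of exactly minimum size $\tvd$ without knowing $\tvd$ in advance, I would wrap the branching in iterative deepening: run it with budget $0, 1, 2, \dots$ and return the smallest budget for which a modulator is found; since $\sum_{i=0}^{\tvd} 3^i = \Oh(3^{\tvd})$, the total time remains $3^{\tvd} \cdot n^{\Oh(1)}$.

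Honestly there is no genuinely hard step here: the result is the textbook hitting-set branching applied to the forbidden pattern of non-transitivity, and the paper has essentially stated the argument already. The only points requiring care are (i) verifying that $G_R$ can be produced in polynomial time in each of the four settings, and (ii) ensuring minimality through iterative deepening rather than merely returning \emph{some} modulator of size at most $\tvd$; both are routine.
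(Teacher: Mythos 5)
Your proposal is correct and matches the paper's argument exactly: the paper also justifies this proposition via the standard $3$-way branching on a violating triple $(u,v,w)$ with $(u,v),(v,w)\in A$ and $(u,w)\notin A$, using the same observation that every transitivity modulator must hit such a triple. The additional details you supply (polynomial-time construction of $G_R$ and iterative deepening over the budget) are routine and consistent with what the paper leaves implicit.
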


Based on this result, we now present an \FPT-algorithm for~\textsc{Open TCC} when parameterized by~$\tvd$.

\begin{lemma}\label{tvd algo if set known}
	Let~$I:=(\mg,k)$ be an instance of~\OTCC  with reachability graph $G_R$. 
	Let $S$ be a given transitivity modulator of $G_R$. 
	Then, we can solve~$I$ in time $2^{|S|} \cdot n^{\Oh(1)}$.
\end{lemma}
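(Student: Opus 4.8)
The plan is to exploit the fact that once we delete $S$, the remaining reachability graph $G_R[V \setminus S]$ is transitive, so by Corollary~\ref{lem:scc-are-cliques} its maximal bidirectional cliques coincide exactly with its strongly connected components, which can be computed in polynomial time. By Lemma~\ref{lem:tcc-are-cliques}, we are really searching for a maximum-size bidirectional clique $C$ in $G_R$; the difficulty is only that $C$ may intersect the ``hard'' set $S$ in an arbitrary way. Since $|S| = \tvd$ is small, I would brute-force over which vertices of $S$ belong to the sought component: iterate over all $2^{|S|}$ subsets $S' \subseteq S$, and for each guess try to extend $S'$ optimally using vertices of $V \setminus S$.

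\medskip
\noindent\textbf{Key steps.} First I would compute the sccs of the transitive graph $H := G_R[V \setminus S]$; by Corollary~\ref{lem:scc-are-cliques} each scc is a maximal bidirectional clique of $H$, and these cliques partition $V \setminus S$. Next, fix a guess $S' \subseteq S$ for the part of the component inside the modulator. For $S'$ to be extendable into a bidirectional clique, $S'$ must itself be a bidirectional clique in $G_R$; discard the guess otherwise. Now I need to add as many vertices of $V \setminus S$ as possible while keeping everything a bidirectional clique. A vertex $v \in V \setminus S$ is \emph{compatible} with $S'$ if $(v,s),(s,v) \in A(G_R)$ for every $s \in S'$. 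The crucial observation is that two vertices $u,v \in V \setminus S$ can lie together in a bidirectional clique only if they are in the same scc of $H$ (since mutual reachability inside the transitive graph $H$ forces them into a common scc, hence a common maximal bidirectional clique). Therefore the compatible vertices of $V \setminus S$ that can be added to $S'$ must all come from a single scc $Q$ of $H$, and within such a $Q$ every subset is automatically a bidirectional clique. Hence for each scc $Q$ I simply count the vertices of $Q$ compatible with $S'$, take the best $Q$, and the candidate component has size $|S'| + \max_{Q} |\{v \in Q : v \text{ compatible with } S'\}|$. Finally, I return yes if and only if some guess yields a candidate of size at least $k$.

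\medskip
\noindent\textbf{Running time and correctness.} There are $2^{|S|}$ guesses, and for each the compatibility test and per-scc counting run in $n^{\Oh(1)}$ time, giving the claimed $2^{|S|}\cdot n^{\Oh(1)}$ bound; the scc computation is a one-time polynomial cost. Correctness in one direction is immediate: any candidate produced is genuinely a bidirectional clique (the $S'$ part by the guess check, the $V\setminus S$ part since it lies in a single scc of $H$ and is compatible with all of $S'$, and the two parts are mutually connected by compatibility), hence a tcc by Lemma~\ref{lem:tcc-are-cliques}. Conversely, given any optimal tcc $C$, set $S' := C \cap S$; then $C \setminus S'$ is a bidirectional clique contained in $V \setminus S$, so by the scc argument it lies within one scc $Q$ of $H$ and every such vertex is compatible with $S'$, meaning the guess $S'$ recovers a candidate of size at least $|C|$.

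\medskip
\noindent\textbf{Main obstacle.} The one place that needs genuine care is justifying that the vertices added from $V \setminus S$ can be confined to a single scc of $H$. This rests on transitivity of $H$: if $u,v \in V\setminus S$ are both in the sought bidirectional clique, then $u \leadsto v$ and $v \leadsto u$ within $G_R$, and the portions of the clique in $V\setminus S$ reach each other within $H$ (directly, as they are pairwise adjacent there), forcing them into the same scc. I would make sure the argument does not implicitly require the bidirectional edges between $S'$ and $Q$ to respect transitivity in a way $H$ cannot guarantee — the point is that compatibility is checked against the \emph{original} graph $G_R$, not against $H$, so no such issue arises.
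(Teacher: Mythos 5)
Your proposal is correct and follows essentially the same strategy as the paper: guess $S' = C \cap S$ over all $2^{|S|}$ subsets, reject non-bidirectional-clique guesses, and extend $S'$ by vertices of $V \setminus S$ that are bidirectionally connected to all of $S'$, using the fact that the sccs of the transitive graph $G_R[V\setminus S]$ are exactly its maximal bidirectional cliques. The only cosmetic difference is that you compute the sccs of $G_R[V\setminus S]$ once and count compatible vertices per scc, whereas the paper recomputes the sccs of the induced subgraph on the compatible vertices for each guess; both variants are sound and meet the stated running time.
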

\begin{proof}	
By Lemma~\ref{lem:scc-are-cliques}, every scc in $G_R[V\setminus S]$ is a bidirectional clique, since~$S$ is a transitivity modulator for~$G_R$.
Lemma~\ref{lem:tcc-are-cliques} then implies that each tcc~$C$ in $\mathcal{G}$ with~$C\cap S = \emptyset$ is an scc in~$G_R[V\setminus S]$ and vice versa.

	The \FPT-algorithm then works as follows: 
	We iterate over all subsets $S'$ of $S$ with the idea to find a tcc that extends $S'$. 
	If $S'$ is not a bidirectional clique in~$G_R$, we discard the current set and continue with the next subset of~$S$, as no superset of $S'$ is a bidirectional clique and thus also not a tcc.
	Hence, assume that $S'$ is a bidirectional clique.
If~$S'$ has size at least~$k$, $I$ is a trivial yes-instance of~\OTCC.
Otherwise, we do the following:
Let~$V'$ be the vertices of $V \setminus S$ that are bidirectional connected to every vertex in $S'$. 
As $G_R[V \setminus S]$ is transitive, \Cref{lem:trans-closed-delete} implies that $G_R[V']$ is also transitive. 
Hence, the sccs in $G_R[V']$ correspond to tccs in $\mathcal{G}$ by~\Cref{lem:scc-are-cliques} and \Cref{lem:tcc-are-cliques}.
	Since every vertex in $S'$ is bidirectional connected to every other vertex in $S' \cup V'$ in $G_R$, for each bidirectional clique~$C \subseteq V '$ in~$G_R[V']$, $C \cup S'$ is a tcc in~$\mg$.
	Hence, it remains to check, whether any scc in $G_R[V']$ has size at least $k- |S'|$. 
	\Cref{tvd figure if set known} illustrates the sets $S$, $S'$, and $V'$.
	
\begin{figure}
\begin{center}
\begin{tikzpicture}

\tikzstyle{knoten}=[circle,fill=white,draw=black,minimum height=6pt,minimum width=7pt,inner sep=0pt]
\tikzstyle{bez}=[inner sep=0pt]

\newcommand\drawbgrect[3][blue!50]{
\node (bl) at (#2) {};
\node (tr) at (#3) {};

\draw[rounded corners, fill=#1, draw=black!10] ($(bl) + (-.4,-.5)$) rectangle ($(tr) + (.4,.5)$) {};
}

\begin{scope}[yshift=-50]

\node[knoten] (v1) at (0,0) {};
\node[knoten] (v2) at ($(v1) + (1,0)$) {};
\node[knoten] (v4) at ($(v2) + (0,-1)$) {};

\node[knoten] (v3) at ($(v2) + (2,0)$) {};
\node[knoten] (v5) at ($(v4) + (2,0)$) {};
\node[knoten] (w1) at ($(v3) + (1,0)$) {};
\node[knoten] (w2) at ($(v5) + (1,0)$) {};

\

\draw[stealth-stealth,thick] (v1) to (v2);
\draw[stealth-stealth,thick] (v2) to (v4);
\draw[stealth-stealth,thick] (v4) to (v1);

\draw[stealth-stealth,thick] (v3) to (v5);
\draw[stealth-stealth,thick] (w1) to (w2);
\draw[-stealth,thick] (v3) to (w1);
\draw[-stealth,thick] (v5) to (w1);
\draw[-stealth,thick] (v3) to (w2);
\draw[-stealth,thick] (v5) to (w2);

\end{scope}

\node[knoten] (x1) at (0,0) {};

\foreach \x [count=\xi] in {2,3,4,5}{        
 \node[knoten] (x\x) at ($(x\xi) + (1,0)$) {};
}

  \tikzstyle{every path}=[black]
  
\draw[-stealth,thick] (x1) to (v1);
\draw[-stealth,thick] (x2) to (x1);
\draw[stealth-stealth,thick] (x4) to (x5);
\draw[stealth-stealth,thick] (w1) to (x4);
\draw[-stealth,thick] (w1) to (x5);

\draw[-stealth, thick] (x3) to (w1);

\draw[stealth-stealth,thick] (x2) to (x3);

\draw[stealth-stealth, thick] (x2) to (v1);
\draw[stealth-stealth, thick] (x3) to (v1);
\draw[stealth-stealth, thick] (x2) to (v2);
\draw[stealth-stealth, thick] (x3) to (v2);

\draw[stealth-stealth, thick] (x2) to (v3);
\draw[stealth-stealth, thick] (x3) to (v3);

\begin{pgfonlayer}{background}

\draw[rounded corners, fill=black!20, draw=black!10] ($(x1) + (-.5,-.5)$) rectangle ($(x5) + (.5,.5)$) {};

\draw[rounded corners, fill=blue!40, draw=black!10] ($(x2) + (-.3,-.3)$) rectangle ($(x3) + (.3,.3)$) {};

\end{pgfonlayer}

    \node[single arrow, draw=black,
      minimum width = 8pt, single arrow head extend=3pt,
      minimum height=8mm] at ($(x5) + (2.2,-1)$) {};

\begin{scope}[xshift=220]

\begin{scope}[yshift=-50]

\node[knoten] (v1) at (0,0) {};
\node[knoten] (v2) at ($(v1) + (1,0)$) {};

\node[knoten] (v3) at ($(v2) + (2,0)$) {};

\draw[stealth-stealth,thick] (v1) to (v2);

\end{scope}

\node[knoten] (x2) at (1,0) {};
\node[knoten] (x3) at (2,0) {};

\draw[stealth-stealth,thick] (x2) to (x3);

\draw[stealth-stealth, thick] (x2) to (v1);
\draw[stealth-stealth, thick] (x3) to (v1);
\draw[stealth-stealth, thick] (x2) to (v2);
\draw[stealth-stealth, thick] (x3) to (v2);

\draw[stealth-stealth, thick] (x2) to (v3);
\draw[stealth-stealth, thick] (x3) to (v3);

\begin{pgfonlayer}{background}

\draw[rounded corners, fill=blue!40, draw=black!10] ($(x2) + (-.3,-.3)$) rectangle ($(x3) + (.3,.3)$) {};

\end{pgfonlayer}

\end{scope}

\end{tikzpicture}
\end{center}

\caption{Illustration of the algorithm in \Cref{tvd algo if set known}. On the left: reachability graph $G_R$ with transitivity modulator $S$ in gray and the chosen subset~$S' \subseteq S$ to extend in blue. On the right: The subset $S'$ together with the vertices $V'$ that are bidirectionally connected to all vertices in $S'$.}
\label{tvd figure if set known}
\end{figure}
	
	Finding the strongly connected components of a graph and identifying whether a set of vertices forms a bidirectional clique can be done in polynomial time. 
	Hence, our algorithm runs in time~$2^{\tvd}\cdot n^{\Oh(1)}$, since we iterate over each subset~$S'$ of~$S$. 
\end{proof}

Based on~\Cref{compute tvd} and~\Cref{tvd algo if set known}, we thus derive our \FPT-algorithm for~\OTCC when parameterized by~$\tvd$.

\begin{theorem}\label{algo tvd}
	\OTCC can be solved in $3^{\tvd} \cdot n^{\Oh(1)}$~time.
\end{theorem}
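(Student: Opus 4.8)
The plan is to simply compose the two preceding results. First I would make sure that the reachability graph $G_R$ of the input temporal graph~$\mg$ is available: this is a polynomial-time preprocessing step, since for each vertex~$u$ one can perform a single temporal traversal (a foremost-reachability computation) to determine all vertices~$v$ with $u \leadsto v$, and collecting these arcs over all choices of~$u$ yields $G_R = (V, R)$ in time~$n^{\Oh(1)}$ in every setting. This holds identically for the strict and non-strict as well as the directed and undirected variants, since the only difference lies in which traversals are admissible.

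Next I would invoke \Cref{compute tvd} on~$G_R$ to obtain a minimum transitivity modulator~$S$ with $|S| = \tvd$ in time $3^{\tvd}\cdot n^{\Oh(1)}$. Feeding this set~$S$ into \Cref{tvd algo if set known} then solves the instance~$I := (\mg, k)$ in time $2^{|S|}\cdot n^{\Oh(1)} = 2^{\tvd}\cdot n^{\Oh(1)}$. The total running time is therefore $n^{\Oh(1)} + 3^{\tvd}\cdot n^{\Oh(1)} + 2^{\tvd}\cdot n^{\Oh(1)}$, which is dominated by the $3^{\tvd}\cdot n^{\Oh(1)}$ term because $2^{\tvd}\le 3^{\tvd}$, giving the claimed bound.

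Since both building blocks are already established, there is essentially no obstacle here; the statement is a direct pipeline of \Cref{compute tvd} followed by \Cref{tvd algo if set known}. The only point that warrants a word of justification is that~$G_R$ can be constructed in polynomial time and that the modulator produced by \Cref{compute tvd} is exactly the kind of transitivity modulator that \Cref{tvd algo if set known} expects as input, so that the two results chain together without any loss. Both of these observations are immediate, and I would keep the write-up correspondingly short.
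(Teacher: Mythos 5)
Your proposal is correct and matches the paper exactly: the theorem is stated there as an immediate consequence of \Cref{compute tvd} and \Cref{tvd algo if set known}, chained in precisely the way you describe. Your additional remark that the reachability graph is computable in polynomial time is a harmless (and valid) piece of extra justification that the paper leaves implicit.
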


\subsection*{Kernelization Lower Bounds}
In this section, we show that a polynomial kernel for~\OTCC when parameterized by~$\tvd + \vc$ is unlikely, where~$\vc$ is the vertex cover number of the underlying graph.
Note that~$\tvd$ and~$\vc$ are incomparable:
On the one hand, consider a temporal graph~$\mg$ where the underlying graph~$G$ is a star with leaf set~$X\cup Y$ and center~$c$, such that the edges from~$X$ to~$c$ exist in snapshots~$G_1$ and~$G_3$ and the edges from~$Y$ to~$c$ exist in snapshot~$G_2$.
Then, each vertex of~$X$ can reach each other vertex, but in the strict setting, no vertex of~$Y$ can reach any other vertex of~$Y$.
Hence, each minimum transitivity modulator has to contain all vertices of~$X$ or all but one vertex of~$Y$, which implies that for~$|X|=|Y|$, $\tvd \in \Theta(|V(\mg))$, whereas the vertex cover number of~$G$ is only~$1$.
On the other hand, consider a temporal graph~$\mg$ with only one snapshot~$G_1$, such that~$G_1$ is a clique.
Then, the underlying graph of~$\mg$ is exactly~$G_1$ and has a vertex cover number of~$|V(\mg)|-1$, but the strict reachability graph of~$\mg$ is a bidirectional clique, which is a transitive graph.
Hence, $\tvd(\mg) = 0$.

We now present our kernelization lower bound for the strict undirected version of~\OTCC.

\begin{theorem}
	The strict undirected version of~\OTCC does not admit a polynomial kernel when parameterized by~$\vc + \tvd + k$, unless~\bth, where~$\vc$ denotes the vertex cover number of the underlying graph.
\end{theorem}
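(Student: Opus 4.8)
The plan is to reduce from \CL parameterized by the vertex cover number, exploiting the single-snapshot correspondence already observed in this section. Recall that for a \emph{strict} temporal graph with lifetime~$1$, whose single snapshot has underlying graph~$G$, the reachability graph~$G_R$ is exactly~$G$ with every edge replaced by a pair of opposite arcs: a strict temporal path can use at most one edge, so $u\leadsto v$ iff $\{u,v\}\in E(G)$. Consequently, tccs of~$\mg$ are precisely the cliques of~$G$ (by~\Cref{lem:tcc-are-cliques}); the bidirectional graph~$G_R$ is transitive iff $G$ is a disjoint union of cliques, so $\tvd$ equals the cluster vertex deletion number $\mathrm{cvd}(G)\le\vc(G)$; and we may assume $k=\kappa\le\vc(G)+1$ (otherwise the instance is trivially negative). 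Mapping an instance $(G,\kappa)$ of \CL to the single-snapshot instance $(\mg,\kappa)$ is therefore a polynomial parameter transformation whose target parameter $\vc+\tvd+k$ is $\Oh(\vc(G))$. Strictness is essential here and explains the stated exception: in the \emph{non-strict} setting a single snapshot makes reachability coincide with connectivity, so $G_R$ is already a disjoint union of bidirectional cliques and hence transitive.

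It then suffices to show that \CL parameterized by the vertex cover number admits no polynomial kernel unless \bth. I would establish this by an OR-cross-composition from \CL: using a polynomial equivalence relation that buckets the $t$ input instances by their number of vertices and target clique size, I would combine them into a single graph~$H^\star$ that has a clique of a prescribed size iff at least one input does, while keeping $\vc(H^\star)$ bounded by a polynomial in $\max_i|x_i|+\log t$. The instances are addressed by a selection gadget on $\Oh(\log t)$ cover-side vertices (binary addressing), and the element vertices are arranged so as not to inflate the cover. By the cross-composition framework, such a composition rules out a polynomial kernel for \CL parameterized by~$\vc$.

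Combining the two steps, a polynomial kernel for the strict undirected version of \OTCC parameterized by $\vc+\tvd+k$ would, through the transformation of the first paragraph (whose target parameter is $\Oh(\vc(G))$ and both problems being \NP-complete), yield a polynomial kernel for \CL parameterized by~$\vc$, contradicting the second step unless \bth.

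The temporal part of the argument is immediate from the single-snapshot correspondence; the crux is the source lower bound, that is, the cross-composition. The main obstacle there is to bound $\vc(H^\star)$ by $\mathrm{poly}(\max_i|x_i|+\log t)$ while faithfully realizing the OR: a naive disjoint union of the inputs inflates the vertex cover number linearly in~$t$, so the selection gadget must both pin down a single input instance and prevent any clique of~$H^\star$ from straddling two inputs, so that $\omega(H^\star)$ equals the maximum clique size over the selected instance. Designing this gadget and proving OR-correctness simultaneously with the cover bound is the delicate step; everything else is routine.
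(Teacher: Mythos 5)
Your first step coincides exactly with the paper's proof: the single-snapshot instance $\mg$ with $G$ as its only snapshot, the observation that strict reachability with lifetime~$1$ makes $G_R$ the bidirected version of $G$, hence tccs are cliques (\Cref{lem:tcc-are-cliques}), and the bound $\tvd\le\vc(G)$ because deleting a vertex cover leaves an independent set (your sharper remark that $\tvd$ equals the cluster vertex deletion number, and that $k\le\vc(G)+1$ may be assumed, is correct but not needed). Your explanation of why the non-strict undirected setting is excluded is also exactly the right one.

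The genuine gap is in your second step. The paper does \emph{not} reprove the source lower bound; it cites the known result that \CL parameterized by $k$ plus the vertex cover number admits no polynomial kernel unless \bth. You instead propose to establish this by an OR-cross-composition, but you only describe a plan and explicitly concede that its crux --- a selection gadget that realizes the OR while keeping the vertex cover of the composed graph polynomial in $\max_i|x_i|+\log t$ and preventing cliques from straddling two input instances --- is left undesigned. That is precisely the hard part: a naive union fails as you note, and the standard construction (position/choice gadgets plus edge-verification machinery whose total size is independent of $t$) is nontrivial. As written, your argument therefore does not constitute a proof. The gap is immediately repaired by invoking the literature result the paper cites instead of attempting the cross-composition from scratch; with that substitution your argument becomes the paper's proof.
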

\begin{proof}
	This result immediately follows from the known~\cite{KKK00} reduction from~\CL which, in fact, is as a polynomial parameter transformation.
	
\prob{\CL}
{An undirected graph $G = (V,E)$ and integer $k$.}
{Is there a clique of size~$k$ in~$G$?}

	For the sake of completeness, we recall the reduction.	
	Let~$I:=(G=(V,E),k)$ be an instance of~\CL and let~$\mg$ be the temporal graph with lifetime 1, where~$G$ is the unique snapshot of~$\mg$.
	
	Then, for each vertex set~$X\subseteq V$, $X$ is a clique in~$G$ if and only if~$X$ is a strict tcc in~$\mg$.
	Hence, $I$ is a yes-instance of~\CL if and only if~$(\mathcal{G},k)$ is a yes-instance of the strict undirected version of~\OTCC.
	It is known that~\CL does not admit a polynomial kernel when parameterized by~$k$ plus the vertex cover number of~$G$, unless~\bth~\cite{C+05}.
	Let~$S$ be a minimum size vertex cover of~$G$ and let~$G_R$ be the strict reachability graph of~$\mg$.
	Note that~$G_R$ contains an arc~$(u,v)$ with~$u\neq v$ if and only if~$\{u,v\}$ is an edge of~$G$.
	Hence, $V\setminus S$ is an independent set in~$G_R$, which implies that~$S$ is a transitivity modulator of~$G_R$.
	Consequently, $\tvd \leq |S|$.
	Recall that~\CL does not admit a polynomial kernel when parameterized by~$k + |S|$, unless~\bth~\cite{C+05}.
	This implies that the strict and undirected version of~\OTCC does not admit a polynomial kernels when parameterized by the vertex cover number of the underlying graph plus~$\tvd$ plus~$k$, unless~\bth.
\end{proof}

Next, we present the same lower bound for both directed versions of~\OTCC.

\begin{theorem}\iflong\else[$\star$]\fi\label{no kernel directed}
	The directed version of~\OTCC does not admit a polynomial kernel when parameterized by~$\vc + \tvd + k$, unless~\bth, where~$\vc$ denotes the vertex cover number of the underlying graph.
	This holds both for the strict and the non-strict version of the problem.
\end{theorem}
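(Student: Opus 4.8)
The plan is to prove the bound by exhibiting a polynomial parameter transformation from \CL parameterized by $k$ plus the vertex cover number, which admits no polynomial kernel unless~\bth~\cite{C+05}. Unlike the strict undirected case, the trivial lifetime-$1$ reduction fails here: in the non-strict setting a single snapshot lets a temporal path chain through arbitrarily many edges, so the reachability graph is no longer the bidirectional orientation of $G$. I would therefore build a small-lifetime directed temporal graph in which reachability is forced to stay ``local'', and make it \emph{proper}, so that the strict and non-strict reachability graphs coincide and both versions are handled at once.

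Given $(G=(V,E),k)$, first I would compute in polynomial time a (say, $2$-approximate) vertex cover $C$ of $G$, and set $I := V\setminus C$, so that $I$ is independent and every edge meets $C$. The temporal graph $\mg$ has vertices $\iin[c],\iout[c]$ for each $c\in C$ and a single vertex $a_w$ for each $w\in I$; the \emph{representatives} are $\{\iout[c]:c\in C\}\cup\{a_w:w\in I\}$. With lifetime $3$ I place, for each edge $\{c,w\}$ with $c\in C,\,w\in I$, the arc $a_w\to\iin[c]$ at time $1$ and $\iout[c]\to a_w$ at time $2$; for each edge $\{c,c'\}\subseteq C$ the arcs $\iout[c]\to\iin[c']$ and $\iout[c']\to\iin[c]$ at time $1$; and for each $c\in C$ the internal arc $\iin[c]\to\iout[c]$ at time $3$. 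A direct check shows that at every time step each vertex is either a pure source or a pure sink, so $\mg$ is proper and its strict and non-strict reachability graphs agree.

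The heart of the argument is to show that the bidirectional-reachability structure among the representatives is \emph{exactly} $G$. The desired reachabilities even hold strictly: $a_w\to_1\iin[c]\to_3\iout[c]$ gives $a_w\leadsto\iout[c]$, the arc $\iout[c]\to_2 a_w$ gives the reverse, and $\iout[c]\to_1\iin[c']\to_3\iout[c']$ gives $\iout[c]\leadsto\iout[c']$. The crux is that the time order $1<2<3$ cuts every spurious composition: once a path reaches an out-vertex $\iout[c]$ through its internal arc (time $3$), all of its outgoing arcs (times $1,2$) lie in the past, so the path cannot continue; likewise, once a path reaches a leaf $a_w$ through an incoming arc (time $2$), its only outgoing arc (time $1$) lies in the past. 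Hence the patterns $I$--$C$--$I$, $C$--$I$--$C$ and $C$--$C$--$C$ are all blocked, so no two $I$-representatives and no non-adjacent pair ever become mutually reachable. I would also note that each $\iin[c]$ reaches only $\iout[c]$ while $\iout[c]$ does not reach $\iin[c]$, so $\iin[c]$ lies in no bidirectional clique of size $\ge 2$. By \cref{lem:tcc-are-cliques} the tccs of $\mg$ of size $\ge 2$ are then precisely the cliques of $G$, and setting $k':=k$ yields an equivalent instance in both settings.

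Finally I would bound the parameters. Every arc is incident to a hub, so $H:=\{\iin[c],\iout[c]:c\in C\}$ is a vertex cover of the underlying graph, giving $\vc\le 2|C|\in\Oh(\vc(G))$. Deleting $H$ from $G_R$ leaves only the $I$-representatives, among which there are no arcs, so $G_R-H$ is (trivially) transitive and $\tvd\le 2|C|\in\Oh(\vc(G))$. Together with $k'=k$ this yields $\vc+\tvd+k'\in\Oh(\vc(G)+k)$, so the construction is a polynomial parameter transformation and the claimed lower bound follows for both the strict and the non-strict directed versions. The main obstacle, and the only point where the directed case genuinely departs from the undirected one, is reconciling two conflicting demands: leaves must attach \emph{directly} to the few hub vertices (to keep $\vc$ and $\tvd$ small), yet no temporal path may pass \emph{through} a hub or a leaf (to kill non-strict compositions); splitting each hub into $\iin[c]$ and $\iout[c]$ together with the $1<2<3$ timing is exactly what resolves this tension.
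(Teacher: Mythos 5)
Your proof is correct, but it takes a genuinely different route from the paper's. The paper does not reduce from plain \CL{} parameterized by $\vc+k$; it first passes to the refined lower bound of~\cite{GK20}, where the cover $S$ induces a $(k-1)$-partite graph so that every $k$-clique has exactly one vertex outside $S$, then encodes $G[S]$ via an adaptation of the Bhadra--Ferreira edge-gadget reduction~\cite{BF03} (introducing two vertices $e_{uv},e_{vu}$ per edge of $G[S]$) and finally attaches the independent-set vertices through extra $\iin$-vertices over a lifetime-$5$ schedule. This yields parameter bounds $\vc,\tvd\in\Oh(|S|^2)$, with the quadratic blow-up coming from the edge vertices. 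Your construction instead splits each cover vertex into $\iin[c]/\iout[c]$, keeps each independent-set vertex as a single leaf, and uses a lifetime-$3$ proper schedule in which the timing $1<2<3$ kills all spurious compositions; the correctness argument (each $\iin[c]$ reaches only $\iout[c]$ and is never reached back, no two leaves reach each other, and mutual reachability among representatives is exactly adjacency in $G$) is sound, and the graph is proper, so strict and non-strict coincide as required. What your approach buys is a linear bound $\vc,\tvd\in\Oh(\vc(G))$ and no need for the $(k-1)$-partite assumption or the quadratic edge-gadget, at the cost of being specific to the directed setting (which is all the theorem asks for). Both are valid polynomial parameter transformations from the same source lower bound~\cite{C+05}, so the conclusion follows either way; the only cosmetic omission in your write-up is the trivial case $k\le 1$, which does not affect correctness.
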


\newcommand{\proofNoKern}{}
\begin{proof}
	Again, we present a polynomial parameter transformation from~\CL.

	Recall that~\CL does not admit a polynomial kernel when parameterized by the size of a give minimum size vertex cover~$S$ of~$G$ plus~$k$, unless~\bth~\cite{C+05}.
	This holds even if~$G[S]$ is~$(k-1)$-partite~\cite{GK20}, which implies that each clique of size~$k$ in~$G$ contains exactly~$k-1$ vertices of~$S$ and exactly one vertex of~$V\setminus S$, since~$V \setminus S$ is an independent set.

	\subparagraph{Construction.}
	Let~$I:=(G:=(V,E),k)$ be an instance of~\CL and let~$S$ be a given minimum size vertex cover~$S$ of~$G$, such that~$G[S]$ is~$(k-1)$-partite.
	Assume that~$k > 6$.
	
	We obtain an equivalent instance of~\OTCC in two steps:
	First, we perform an adaptation of a known reduction~\cite{BF03} from the instance~$(G[S],k-1)$ of~\CL to an instance~$(\widetilde{\mathcal{G}},k-1)$ of the directed version of~\OTCC where each sufficiently large (of size at least~$5$) vertex set~$X$ of~$\widetilde{\mathcal{G}}$ is a tcc in~$\widetilde{\mathcal{G}}$ if and only if~$X$ is a clique in~$G[S]$.
	Second, we extend~$\widetilde{\mathcal{G}}$ by the vertices of~$V\setminus S$ and some additional connectivity-gadgets, to ensure that the resulting temporal graph has a tcc of size~$k$ if and only if there is a vertex from~$V\setminus S$ for which the neighborhood in~$G$ contains a clique of size~$k-1$.

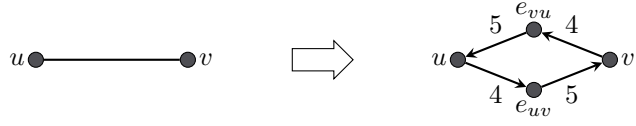
\begin{figure}[t]
  \centering
    \begin{tikzpicture}
      \tikzstyle{every node}=[draw,circle,fill=darkgray,inner sep=2pt]
      \path (-.2,1) node (vv){};
      \path (-2.2,1) node (uu){};

\begin{scope}[xshift = 20]
      \path (4.65,1) node (v){};
      \path (2.65,1) node (u){};
      \path (3.65,1.4) node (v'){};
      \path (3.65,.6) node (u'){};
\end{scope}

      \tikzstyle{every node}=[inner sep=4pt]

    \node[single arrow, draw=black,
      minimum width = 8pt, single arrow head extend=3pt,
      minimum height=8mm] at (1.5,1) {};

      \path (u) node[left] {$u$};
      \path (v) node[right] {$v$};
      \path (uu) node[left] {$u$};
      \path (vv) node[right] {$v$};
      \path (u') node[below] {$e_{uv}$};
      \path (v') node[above] {$e_{vu}$};
      \draw[thick] (uu)--(vv);
      \draw[-stealth, thick] (u)-- node[below]{4}(u');
      \draw[-stealth, thick] (u')-- node[below]{5}(v);
      \draw[-stealth, thick] (v)-- node[above]{4}(v');
      \draw[-stealth, thick] (v')-- node[above]{5}(u);
    \end{tikzpicture}
    \caption{\label{fig:semaphore}For two adjacent vertices~$u$ and~$v$ of~$S$ the vertices and arcs added to the temporal graph~$\wg$ in the proof of~\Cref{no kernel directed}.}
  \end{figure}
	
	Let~$(\widetilde{\mathcal{G}},k-1)$ be the instance of~\OTCC constructed as follows:
We initialize~$\wg$ as an edgeless temporal graph of lifetime~$5$ with vertex set~$S \cup \{e_{uv}, e_{vu}\mid \{u,v\}\in E_G(S)\}$.	
Next, for each edge~$\{u,v\}\in E$, we add the arcs~$(u, e_{uv})$ and~$(v, e_{vu})$ to time step 4 and add the arcs~$(e_{uv}, v)$ and~$(e_{vu}, u)$ to time step 5. 
This completes the construction of~$\wg$.
An example of the arcs added to~$\wg$ is shown in~\Cref{fig:semaphore}.
Note that the first three snapshots of~$\wg$ are edgeless.
	This construction is an adaptation of the reduction presented by Bhadra~and~Ferreira~\cite{BF03} to the case of directed temporal graphs.
	Note that the temporal graph~$\wg$ has the following properties that we make use of in our reduction:
	\begin{enumerate}[1)]
		\item $\wg$ is a proper and simple directed temporal graph,
		\item the vertex set~$\mathcal{V}$ of~$\wg$ has size~$\Oh(|S|^2)$ and contains all vertices of~$S$,
		\item each tcc of size at least~$k-1$ in~$\wg$ contains only vertices of~$S$, and
		\item each vertex set~$X\subseteq S$ of size at least~$k-1$ is a tcc in~$\wg$ if and only if~$X$ is a clique in~$G[S]$.
	\end{enumerate}
	Note that the two last properties imply that the largest tcc of~$\wg$ has size at most~$k-1$, since~$G[S]$ is~$(k-1)$-partite.
	
	Next, we describe how to extend the temporal graph~$\wg$ to obtain a temporal graph~$\mg'$ which has a tcc of size~$k$ if and only if~$I$ is a yes-instance of~\CL.
	Let~$n := |V|$.
	Moreover, let~$\mg'$ be a copy of~$\wg$.
	We extend the vertex set of~$\mg'$ by all vertices of~$V\setminus S$, and a vertex~$\iin$ for each vertex~$v\in S$.
	
	For each vertex~$v\in S$, we add the arc~$(\iin,v)$ to time step~$3$.
	For each vertex~$v\in S$ and each neighbor~$w\in V\setminus S$ of~$v$ in~$G$, we add the arc~$(v,w)$ to time step~$2$ and the arc~$(w,\iin)$ to time step~$1$.
	This completes the construction of~$\mg'$.
	Let~$V'$ denote the newly added vertices, that is, $V' := (V\setminus S) \cup \{\iin\mid v\in S\}$.
	
	Next, we show that there is a clique of size~$k$ in~$G$ if and only if there is a tcc of size~$k$ in~$\mg'$.
	
	$(\Rightarrow)$
	Let~$K\subseteq V$ be a clique of size~$k$ in~$G$.
	We show that~$K$ is a tcc in~$\mg'$.
	As discussed above, $K$ contains exactly~$k-1$ vertices of~$S$ and exactly one vertex~$w^*$ of~$V\setminus S$.
	By construction of~$\wg$, $K\setminus \{w^*\}$ is a tcc in~$\wg$ and thus also a tcc in~$\mg'$.
	It thus remains to show that each vertex~$K\setminus \{w^*\}$ can reach vertex~$w^*$ in~$\mg'$ and vice versa.
	Since each vertex of~$K\setminus \{w^*\}$ is adjacent to~$w^*$ in~$G$, by construction, $w^*$ is an out-neighbor of each vertex of~$K\setminus \{w^*\}$ in~$\mg'$.
	Hence, it remains to show that~$w^*$ can reach each vertex of~$K\setminus \{w^*\}$ in~$\mg'$.
	Let~$v$ be a vertex of~$K\setminus \{w^*\}$.
	Since~$v$ is adjacent to~$w^*$ in~$G$, there is an arc~$(w^*,\iin[v])$ in~$\mg'$ that exists at time step~$1$.
	Hence, there is a temporal path from~$w^*$ to~$v$ in~$\mg'$, since the arc~$(\iin[v],v)$ exists at time step~$3$.
	Concluding, $K$ is a tcc in~$\mg'$.

	$(\Leftarrow)$
	Let~$X$ be a tcc of size~$k$ in~$\mg'$.
	We show that~$X$ is a clique of size~$k$ in~$G$.
	To this end, we first show that~$X$ contains only vertices of~$V$.
	Afterwards, we show that~$X$ is a clique in~$G$.
	
	To show that~$X$ contains only vertices of~$V$, we first analyze the reachability of vertices of~$V(\mg')$.
	For a vertex~$v\in V(\mg')$, we denote 
\begin{itemize}
\item by~$\omin_v$ the smallest time label of any arc exiting~$v$ and
\item by~$\imax_v$ the largest time label of any arc entering~$v$.
\end{itemize}	
	Note that a vertex~$v$ cannot reach a distinct vertex~$w$ in~$\mg'$ if~$\imax_w < \omin_v$.
	\Cref{tab ell values} show for each vertex~$v\in V(\mg')$ a lower bound for $\omin_v$ and an upper bound for~$\imax_v$.
	
	\begin{table}
		\caption{For each vertex~$v\in V(\mg')$ a lower bound for $\omin_v$ and an upper bound for~$\imax_v$.}
		\centering
		\begin{tabular}{l|cc}
			& $\omin_v$ & $\imax_v$ \\\hline
			$v\in S$ & 2 & 5 \\
			$v\in V(\wg)\setminus S$ & 4 & 5 \\
			$v\in V \setminus S$ & 1 & 2 \\
			$v\in \{\iin[u] \mid u\in S\}$ & 3 & 1
		\end{tabular}
		\label{tab ell values}
	\end{table}
	Based on~\Cref{tab ell values}, we can derive the following properties about reachability in~$\mg'$.

	\begin{claim}\label{connectivity stuff}
		\begin{enumerate}[a)]
			\item No vertex of~$V(\wg)\setminus S$ can reach any vertex of~$V'$ in~$\mg'$.\label{con s new}
			\item No vertex of~$\{\iin\mid v\in S\}$ can reach any other vertex of~$\{\iin\mid v\in S\}$ in~$\mg'$.\label{con in in}
			\item No vertex of~$\{\iin\mid v\in S\}$ can reach any vertex of~$V\setminus S$ in~$\mg'$.\label{con in is}
			\item No vertex of~$S$ can reach any vertex of~$\{\iin\mid v\in S\}$ in~$\mg'$.\label{con s in}\item No vertex of~$V\setminus S$ can reach any other vertex of~$V\setminus S$ in~$\mg'$.\label{con is is}		
		\end{enumerate}
	\end{claim}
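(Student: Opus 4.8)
The plan is to derive all five items from the single observation recorded just before the claim---that a vertex $v$ cannot reach a distinct vertex $w$ in $\mg'$ whenever $\imax_w < \omin_v$---combined with the bounds tabulated in~\Cref{tab ell values}. For items~\ref{con s new}--\ref{con s in} this is a matter of reading off two rows of the table and verifying a single numerical inequality; since the observation only exploits that the labels along a temporal path are non-decreasing, each such argument holds verbatim in both the strict and the non-strict setting. The last item,~\ref{con is is}, is the one place where the table alone is insufficient, and it is where I would spend the actual effort.

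Concretely, for~\ref{con s new} I would take $v \in V(\wg) \setminus S$, so $\omin_v \ge 4$, and observe that any target $w \in V'$ satisfies $\imax_w \le 2$ (if $w \in V \setminus S$) or $\imax_w \le 1$ (if $w = \iin[u]$); in both cases $\imax_w < \omin_v$. For~\ref{con in in} and~\ref{con in is}, a source $v = \iin[u]$ has $\omin_v \ge 3$, while the respective targets satisfy $\imax_w \le 1$ and $\imax_w \le 2$, both strictly below $3$. For~\ref{con s in}, a source $v \in S$ has $\omin_v \ge 2$ and a target $\iin[u]$ has $\imax_w \le 1 < 2$. In each of these four cases the observation rules out reachability immediately.

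The main obstacle is~\ref{con is is}: for $v, w \in V \setminus S$ the table only guarantees $\omin_v \ge 1$ and $\imax_w \le 2$, so the inequality $\imax_w < \omin_v$ need not hold and the purely quantitative argument fails. Here I would instead argue structurally from the construction. The only arcs leaving a vertex of $V \setminus S$ are the arcs $(w, \iin[u])$ inserted at time step~$1$ for neighbours $u \in S$; consequently every temporal path starting at $v$ must first traverse such an arc and hence visit some vertex $\iin[u]$. It then suffices to continue the path from $\iin[u]$: a temporal path leaving $\iin[u]$ (possibly constrained to start no earlier than time~$1$) can only reach fewer vertices than an unconstrained one, and by item~\ref{con in is}---already established---no vertex of $\{\iin[u] \mid u \in S\}$ reaches any vertex of $V \setminus S$. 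Thus $v$ cannot reach any other vertex of $V \setminus S$, which reduces~\ref{con is is} to~\ref{con in is} and finishes the proof.
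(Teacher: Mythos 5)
Your proposal is correct and follows essentially the same route as the paper: items~a)--d) are read off from the table via the inequality $\imax_w < \omin_v$, and item~e) is reduced to item~c) by observing that every arc leaving a vertex of $V\setminus S$ ends in $\{\iin[u] \mid u\in S\}$. You merely spell out the numeric checks and the suffix-path detail that the paper leaves implicit.
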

	\begin{claimproof}
		Based on~\Cref{tab ell values}, we derive Items~\ref{con s new}) to~\ref{con s in}).
		It remains to show~\Cref{con is is}).
		To this end, observe that each arc with a vertex of~$V\setminus S$  as source has a vertex of~$\{\iin\mid v\in S\}$ as sink.
		Due to~\Cref{con in is}, no vertex of~$\{\iin\mid v\in S\}$ can reach any vertex of~$V\setminus S$ in~$\mg'$.
		Hence, no vertex~$V\setminus S$ can reach any other vertex of~$V\setminus S$ in~$\mg'$.
		This implies that~\Cref{con is is}) holds.
	\end{claimproof}
	
	Since~$X$ is a tcc in~$\mg'$, \Cref{connectivity stuff} implies that~$X$ contains at most one vertex of~$V\setminus S$ (due to~\Cref{con is is}) and at most one vertex of~$\{\iin\mid v\in S\}$ (due to~\Cref{con in in}).
	In other words, $X$ contains at most two vertices of~$V'$.
	Since~$k > 6$, this then implies that~$X$ contains at least one vertex of~$V(\wg)$.
	\Cref{connectivity stuff} thus further implies that~$X$ contains no vertex of~$\{\iin\mid v\in S\}$ (due to~\Cref{con s new,con s in}).
	This then implies that~$X$ contains at least~$k-1$ vertices of~$V(\wg)$.
	
	To show that~$X$ contains only vertices of~$V$ and is a clique in~$G$ we now show that the reachability between any two vertices of~$V(\wg)$ in~$\mg'$ is the same as in~$\wg$.
	Let~$P$ be a temporal path between two distinct vertices of~$V(\wg)$ in~$\mg'$.
	We show that~$P$ is also a temporal path in~$\wg$.
	Assume towards a contradiction that this is not the case.
	Hence, $P$ visits at least one vertex of~$V'$.
	Since no vertex of~$V(\wg)$ can reach any vertex of~$\{\iin\mid v\in S\}$ (due to~\Cref{con s new,con s in}), $P$ visits no vertex of~$\{\iin\mid v\in S\}$.
	Moreover, since each vertex of~$V\setminus S$ has only out-neighbors in~$\{\iin\mid v\in S\}$, $P$ visits no vertex of~$V\setminus S$ either.
	Consequently, $P$ contains no vertex of~$V'$; a contradiction.
	
	Hence, $P$ is a temporal path in~$\wg$, which implies that for each vertex set~$Y \subseteq V(\wg)$, $Y$ is a tcc in~$\wg$ if and only if~$Y$ is a tcc in~$\mg'$.
	Recall that~$X$ contains at least~$k-1$ vertices of~$V(\wg)$.
	Since the largest tcc in~$\wg$ has size at most~$k-1$ and each tcc of size~$k-1$ in~$\wg$ is a clique in~$G$, this implies that~$X\cap V(\wg)$ is a clique of size~$k-1$ in~$G[S]$.
	Since~$X$ contains no vertex of~$\{\iin\mid v\in S\}$, this implies that~$X$ contains exactly one vertex~$w^*$ of~$V\setminus S$.
	Hence, it remains to show that each vertex~$v\in X \setminus \{w^*\}$ is adjacent to~$w^*$ in~$G$.
	Since~$X$ is a tcc in~$\mg'$, $v$ can reach~$w^*$ in~$\mg'$.
	By construction and illustrated in~\Cref{tab ell values}, $\omin_v \geq 2 \geq \imax_{w^*}$.
	Since $v$ reaches $w^*$ and $\mg'$ is a proper temporal graph, the arc~$(v,w^*)$ is contained in~$\mg'$.
	By construction, this implies that~$v$ and~$w^*$ are adjacent in~$G$.
	Consequently, $X$ is a clique in~$G$.
	This completes the correctness proof of the reduction.
	
	\subparagraph{Parameter bounds.}
	It thus remains to show that~$\tvd(\mg')$ and the vertex cover of the underlying graph of~$\mg'$ are at most~$|S|^{\Oh(1)}$ each.
	Let~$V^* := V(\mg') \setminus (V\setminus S)$.
	Note that~$V^*$ has size~$|V(\wg)| + |S|  \in \Oh(|S|^2)$ and is a vertex cover of the underlying graph of~$\mg'$.
	Hence, the vertex cover number of the underlying graph of~$\mg'$ is~$\Oh(|S|^2)$.
	To show the parameter bounds, it thus suffices to show that~$V^*$ is a transitivity modulator of the reachability graph~$G_R$ of~$\mg'$.
	Due to~\Cref{connectivity stuff}, $G_R - V^* = G_R[V\setminus S]$ is an independent set.
	Consequently, $V^*$ is a transitivity modulator of~$G_R$.
	Hence, $\tvd(\mg') \in \Oh(|S|^2)$.
	By the fact that~\CL does not admit a polynomial kernel when parameterized by~$|S| + k$, unless~\bth, \OTCC does not admit a polynomial kernel when parameterized by~$\tvd(\mg')$ plus the vertex cover number of the underlying graph of~$\mg'$ plus~$k$, unless~\bth.
\end{proof}
{}
\iflong
\proofNoKern
\fi

Note that our kernelization lower bounds do not include the non-strict undirected version of~\OTCC.
An modification of~\Cref{no kernel directed} seems difficult, unfortunately.
This is due to the fact that undirected edges can be traversed in both direction, which makes it very difficult to limit the possible reachable vertices in the temporal graph, while preserving a small transitivity modulator.

\section{Arc-Modification Distance to Transitivity - A Polynomial Kernel}
\label{sec:tae}
Next, we focus on the parameterized complexity of~\OTCC when parameterized by the size of a given arc-modification set towards a transitive reachability graph.
As discussed earlier, for each arc-modification set~$M$ towards a transitive reachability graph, $\tvd$ does not exceed~$2\cdot |M|$, since removing the endpoints of all edges of~$M$ results in a transitivity modulator.
This implies the following due to~\Cref{algo tvd} and the fact that a minimum size arc-modification set towards a transitive graph can be computed in $2.57^{\tae}\cdot n^{\Oh(1)}$~time~\cite{WKNU12}.
\begin{corollary}
\OTCC can be solved in $4^{\tae} \cdot n^{\Oh(1)}$~time.
\end{corollary}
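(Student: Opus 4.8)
The plan is to combine the two facts cited in the paragraph preceding the statement. The corollary asserts that \OTCC is solvable in $4^{\tae}\cdot n^{\Oh(1)}$ time, and the key observation is that this follows mechanically from chaining together (i) the algorithm of \Cref{algo tvd} for the parameter $\tvd$, (ii) the inequality $\tvd \leq 2\cdot\tae$ recorded just after the definition of the parameters, and (iii) the external result of~\cite{WKNU12} that a minimum-size arc-modification set towards a transitive digraph can be computed in $2.57^{\tae}\cdot n^{\Oh(1)}$ time. So my proof is not a fresh construction but an assembly of these ingredients, and the main content is verifying that the exponential bases multiply out correctly.

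First I would recall that we do not have a transitivity modulator $S$ handed to us directly; what the cited result of~\cite{WKNU12} gives us is a minimum arc-modification set $M$ of size $\tae$, computed in $2.57^{\tae}\cdot n^{\Oh(1)}$ time. From $M$ I would extract a transitivity modulator: as noted after the parameter definitions, the set of endpoints of the arcs in $M$ forms a transitivity modulator $S$ of $G_R$, because deleting all vertices incident to a modified arc certainly removes every modification, leaving $G_R - S$ equal to an induced subgraph of the transitive graph $(V, A\Delta M)$, which is transitive by \Cref{lem:trans-closed-delete}. This $S$ has size at most $2\cdot\tae$, giving $\tvd \leq |S| \leq 2\cdot\tae$.

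Next I would invoke \Cref{algo tvd}, which solves \OTCC in $3^{\tvd}\cdot n^{\Oh(1)}$ time. Substituting the bound $\tvd \leq 2\cdot\tae$ yields a running time of $3^{2\tae}\cdot n^{\Oh(1)} = 9^{\tae}\cdot n^{\Oh(1)}$ if one uses the branching of \Cref{compute tvd} to find the modulator. However, to obtain the sharper base $4$ claimed in the statement, I would instead feed the explicitly computed modulator $S$ (of size at most $2\tae$) directly into \Cref{tvd algo if set known}, whose running time is $2^{|S|}\cdot n^{\Oh(1)}$. This gives $2^{2\tae}\cdot n^{\Oh(1)} = 4^{\tae}\cdot n^{\Oh(1)}$ for the component-finding phase. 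Adding the $2.57^{\tae}\cdot n^{\Oh(1)}$ preprocessing time for computing $M$, and noting $2.57^{\tae} \leq 4^{\tae}$, the total is dominated by $4^{\tae}\cdot n^{\Oh(1)}$.

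The step requiring the most care—though it is bookkeeping rather than a genuine obstacle—is making sure the construction of $S$ from $M$ routes through \Cref{tvd algo if set known} rather than \Cref{algo tvd}: the latter re-derives the modulator by the $3^{\tvd}$ branching of \Cref{compute tvd}, which would inflate the base to $9$, whereas the former accepts a known modulator and costs only $2^{|S|}$. Thus the precise phrasing of which lemma is applied is what turns the naive base $9$ into the advertised base $4$. No further case analysis or correctness argument is needed, since correctness of the component search is already guaranteed by \Cref{tvd algo if set known} and correctness of $S$ by \Cref{lem:trans-closed-delete}.
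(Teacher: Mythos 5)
Your proof is correct and follows essentially the same route as the paper: compute a minimum arc-modification set $M$ in $2.57^{\tae}\cdot n^{\Oh(1)}$ time, take its endpoints as a transitivity modulator $S$ of size at most $2\tae$, and run the $2^{|S|}\cdot n^{\Oh(1)}$ algorithm of \Cref{tvd algo if set known} to get $4^{\tae}\cdot n^{\Oh(1)}$ overall. Your remark that one must invoke \Cref{tvd algo if set known} directly (rather than \Cref{algo tvd}, which would only yield base $9$) is exactly the right reading of the paper's somewhat terse derivation.
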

In the remainder of this section, we thus consider this parameter with respect to kernelization algorithms.
In contrast to parameterizations by~$\tvd$, we now show that a polynomial kernelization algorithm can be obtained for~\OTCC when parameterized by the size of a given arc-modification set towards a transitive reachability graph.

In fact, we show an even stronger result, since our kernelization algorithm does not need to know the actual arc-modification set but only its endpoints.
To formulate this more general result, we need the following definition:
Let~$G=(V,A)$ be a directed graph.
A transitivity modulator~$S\subseteq V$ of~$G$ is called~\emph{inherent}, if there is an arc-modification set~$M$ with~$M\subseteq S\times S$ for which~$(V, A\Delta M)$ is a transitive graph.
Note that the set of endpoint of an arc-modification set towards a transitive graph always forms an inherent transitivity modulator.

\begin{theorem}\label{kernel if set given}
Let~$I=(\mathcal{G},k)$ be an instance of~\OTCC and let~$G_R = (V,A)$ be the reachability graph of~$\mathcal{G}$.
Moreover, let~$B \subseteq V$ be an inherent transitivity modulator of~$G_R$.
Then, for each version of~\OTCC, one can compute in polynomial time an equivalent instance of total size~$\Oh(|B|^3)$.
\end{theorem}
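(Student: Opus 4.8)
The plan is to reduce \OTCC to finding a maximum bidirectional clique in $G_R$ (by \Cref{lem:tcc-are-cliques}) and to compress $G_R$ into an equivalent reachability graph with $\Oh(|B|^2)$ vertices and $\Oh(|B|^3)$ arcs, which is then realized as a temporal graph in the required setting. Fix an arc-modification set $M \subseteq B \times B$ witnessing that $B$ is inherent, and let $H := (V, A \Delta M)$, which is transitive. Since $M$ only touches arcs inside $B$, every arc of $G_R$ incident to a vertex of $V \setminus B$ is an arc of $H$, and vice versa. The first, and I expect main, step is a structural lemma: the strongly connected components of $G_R - B$ are exactly the sets $Z \setminus B$ for the sccs $Z$ of $H$, and for such a component $C \subseteq Z$ and any $b \in B$, $b$ is bidirectionally adjacent to $C$ in $G_R$ if and only if $b \in Z$. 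Both statements follow from transitivity of $H$ together with the fact that all arcs within $V\setminus B$ and between $V\setminus B$ and $B$ coincide in $G_R$ and $H$: two vertices of $V\setminus B$ are mutually reachable in $G_R - B$ iff they lie in a common scc of $H$, and in a transitive graph two vertices are bidirectionally adjacent iff they share an scc. In particular every scc $C$ of $G_R-B$ is \emph{homogeneous} towards $B$, its bidirectional neighbourhood $D(C) := \{b \in B : (b,c),(c,b)\in A\}$ (for any, equivalently all, $c\in C$) equals $B \cap Z_C$, and hence the sets $D(C)$ are pairwise disjoint subsets of $B$. Consequently there are at most $|B|$ sccs of $G_R-B$ with $D(C)\neq\emptyset$. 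All of this is computable from $G_R$ in polynomial time without knowing $M$; the existence of $M$ is used only to guarantee the disjointness.

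Next I would characterize the optimum. A bidirectional clique $K$ of $G_R$ splits as $K=(K\cap B)\cup(K\setminus B)$, where $K\setminus B$ is a bidirectional clique of $G_R-B$, hence contained in a single scc $C$ (by \Cref{lem:scc-are-cliques}), and where $K\cap B \subseteq D(C)$ must hold. Writing $\omega(X)$ for the size of a maximum bidirectional clique of $G_R[X]$, the largest clique therefore has size $\max\bigl(\omega(B),\ \max_C (|C| + \omega(D(C)))\bigr)$, and for a fixed $C$ one may always take all of $C$. This reduces the kernel to keeping: all of $B$ (so that $\omega(\cdot)$ on subsets of $B$ is preserved verbatim), one representative bidirectional clique for each scc carrying its size and its disjoint signature $D(C)\subseteq B$, and, among the sccs with empty signature, only the largest one (these contribute only $|C|$ and combine with no vertex of $B$). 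After this, at most $|B|+1$ sccs remain.

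The remaining obstacle is that the surviving sccs may still be arbitrarily large, so I would shrink their sizes by a uniform shift. If some scc has size at least $k$, it is itself a tcc and we output a trivial yes-instance; if $k$ exceeds $|B| + \max_C |C|$, no clique reaches $k$ and we output a trivial no-instance. Otherwise let $s$ be the maximum scc size and $\Delta := \max(0, s-|B|)$. Any scc of size below $s - |B|$ is dominated, since its best clique has size at most $(s-|B|-1)+|B| < s$, and is deleted; the sizes of all remaining sccs and the value $k$ are then decreased by $\Delta$. Every clique that uses an scc loses exactly $\Delta$ under this shift while $k$ drops by $\Delta$; moreover, when $\Delta>0$ we have $k>s>|B|$, so every purely-$B$ clique has size at most $|B| < k-\Delta$ and is irrelevant in both instances. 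Hence "is there a clique of size $\ge k$'' is preserved. Each remaining scc now has size in $[0,|B|]$, so the reduced reachability graph $G_R'$ has $\Oh(|B|^2)$ vertices ($B$ together with at most $|B|+1$ sccs of size $\le |B|$), $\Oh(|B|^2)$ arcs inside $B$, $\Oh(|B|^3)$ arcs between sccs and $B$, and $\Oh(|B|^3)$ arcs inside the sccs, for a total of $\Oh(|B|^3)$.

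Finally, I would turn $G_R'$ back into a temporal graph. For this I rely on realizing a prescribed reachability graph exactly by a temporal graph in the relevant setting (strict/non-strict, directed/undirected), constructible in polynomial time and of size polynomial in $|G_R'|$; combined with \Cref{lem:tcc-are-cliques} this yields an equivalent \OTCC-instance of total size $\Oh(|B|^3)$. I expect the two delicate points to be the structural lemma that forces the signatures $D(C)$ to be disjoint (bounding the number of sccs), and this exact realization step, which must avoid creating spurious temporal paths so that the reachability graph of the output is precisely $G_R'$ in each setting.
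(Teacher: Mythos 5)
Your combinatorial compression is sound and, despite the different vocabulary, closely parallels the paper's argument: your sccs of $G_R-B$ with their signatures $D(C)$ are exactly the paper's ``white clusters'' (cliques of real twins in the symmetrized graph), the disjointness of the $D(C)$ is the paper's claim that no vertex of $B$ has neighbours in more than one white cluster, and your uniform shift by $\Delta$ is the paper's rule that removes one vertex from every cluster and decrements $k$ while $k'>|B|+1$. The resulting bounds ($\Oh(|B|^2)$ vertices, $\Oh(|B|^3)$ arcs) match.

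The genuine gap is your final step. You assume a primitive that ``realizes a prescribed reachability graph exactly by a temporal graph in the relevant setting, constructible in polynomial time and of size polynomial in $|G_R'|$.'' No such primitive is available, and the claim is false in general: not every digraph is the reachability graph of a temporal graph (in the undirected settings, for instance, every underlying edge forces a bidirectional arc, so a two-vertex digraph with a single asymmetric arc is already unrealizable), and even for realizable digraphs no polynomial-size exact realization procedure is known; note also that $G_R'$ is an induced subgraph of $G_R$, but deleting the corresponding vertices from $\mathcal{G}$ does not yield a temporal graph with reachability graph $G_R[V']$, since deleted vertices may carry essential temporal paths. The paper sidesteps this entirely: after the compression it only needs an \emph{equivalent} instance, so it symmetrizes to an instance of \CL and applies the known Bhadra--Ferreira reduction from \CL to \OTCC, which produces a proper temporal graph of size $\Oh(n+m)$ whose large tccs correspond to cliques (its reachability graph is \emph{not} the prescribed one). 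Since you have already reduced everything to the existence of a bidirectional clique of size $k'$ in $G_R'$ (via \Cref{lem:tcc-are-cliques}), your proof is repaired by replacing the realization step with exactly this detour through \CL; as stated, however, the last step does not go through.
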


\begin{figure}[t]
  \centering
  \begin{tikzpicture}[yscale=1.2]
  
  \path (4.5,1.5) node (f){};
    \node[single arrow, draw=black,
      minimum width = 8pt, single arrow head extend=3pt,
      minimum height=8mm] at ($(f) + (2.4,-.7)$) {};
  
  \node[] (aaaaaa) at (.6,2) {$\widehat{G}$};
  
\tikzstyle{every node}=[draw,darkgray,fill=blue!50,circle,inner sep=2.5pt]
  \path (.6,1) node (z){};
  \path (1.3,1) node (a){};
  \path (2,1) node (b){};
  \path (3,1) node (d){};
  \path (4,1) node (e){};
  \path (4.5,1.5) node (f){};
  \path (5,1) node (g){};

  \path (1.65,0) coordinate (aa);
  \path (2.9,0) coordinate (bb);
  \path (0.6,0) coordinate (zz);
  \path (4,0) coordinate (dd);
  \path (5,0) coordinate (ee);

  \tikzstyle{every path}=[black]

  \draw[thick] (z)--(a)--(b)--(d)--(e)--(f)--(g)--(e)--(f)--(b);
  
  \tikzstyle{every path}=[gray]
  \draw (aa) -- (a);
  \draw (aa)+(-.3,0) -- (a);
  \draw (aa)+(-.15,0) -- (a);
  \draw (aa)+(.15,0) -- (a);
  \draw (aa)+(.3,0) -- (a);

    \draw (aa) -- (b);
  \draw (aa)+(-.3,0) -- (b);
  \draw (aa)+(-.15,0) -- (b);
  \draw (aa)+(.15,0) -- (b);
  \draw (aa)+(.3,0) -- (b);

      \draw (zz) -- (z);
  \draw (zz)+(-.2,0) -- (z);
  \draw (zz)+(-.1,0) -- (z);
  \draw (zz)+(.1,0) -- (z);
  \draw (zz)+(.2,0) -- (z);

        \draw (dd) -- (d);
  \draw (dd)+(-.25,0) -- (d);
  \draw (dd)+(-.12,0) -- (d);
  \draw (dd)+(.12,0) -- (d);
  \draw (dd)+(.25,0) -- (d);

        \draw (dd) -- (e);
  \draw (dd)+(-.25,0) -- (e);
  \draw (dd)+(-.12,0) -- (e);
  \draw (dd)+(.12,0) -- (e);
  \draw (dd)+(.25,0) -- (e);

        \draw (dd) -- (g);
  \draw (dd)+(-.25,0) -- (g);
  \draw (dd)+(-.12,0) -- (g);
  \draw (dd)+(.12,0) -- (g);
  \draw (dd)+(.25,0) -- (g);

  \tikzstyle{every node}=[draw,darkgray,fill=white,circle,inner sep=8pt]
  \path (aa) node (aaa){};
  \tikzstyle{every node}=[draw,darkgray,fill=white,circle,inner sep=7pt]
  \path (dd) node (ddd){};
  \tikzstyle{every node}=[draw,darkgray,fill=white,circle,inner sep=5pt]
  \path (zz) node (zzz){};
  \tikzstyle{every node}=[draw,darkgray,fill=white,circle,inner sep=4pt]
  \path (bb) node (bbb){};
  \tikzstyle{every node}=[draw,darkgray,fill=white,circle,inner sep=6pt]
  \path (ee) node (eee){};

  \begin{scope}[xshift=220]

  \tikzstyle{every node}=[black]
  
  \node (aaaaaa) at (1.3,2) {$G'$};
  
  \tikzstyle{every node}=[draw,darkgray,fill=blue!50,circle,inner sep=2.5pt]
  \path (1.3,1) node (a){};
  \path (2,1) node (b){};
  \path (3,1) node (d){};
  \path (4,1) node (e){};
  \path (4.5,1.5) node (f){};
  \path (5,1) node (g){};

  \path (1.65,0) coordinate (aa);
  \path (4,0) coordinate (dd);

  \tikzstyle{every path}=[black]

  \draw[thick] (a)--(b)--(d)--(e)--(f)--(g)--(e)--(f)--(b);
  
  \tikzstyle{every path}=[gray]
  \draw (aa) -- (a);
  \draw (aa)+(-.3,0) -- (a);
  \draw (aa)+(-.15,0) -- (a);
  \draw (aa)+(.15,0) -- (a);
  \draw (aa)+(.3,0) -- (a);

    \draw (aa) -- (b);
  \draw (aa)+(-.3,0) -- (b);
  \draw (aa)+(-.15,0) -- (b);
  \draw (aa)+(.15,0) -- (b);
  \draw (aa)+(.3,0) -- (b);


        \draw (dd) -- (d);
  \draw (dd)+(-.25,0) -- (d);
  \draw (dd)+(-.12,0) -- (d);
  \draw (dd)+(.12,0) -- (d);
  \draw (dd)+(.25,0) -- (d);

        \draw (dd) -- (e);
  \draw (dd)+(-.25,0) -- (e);
  \draw (dd)+(-.12,0) -- (e);
  \draw (dd)+(.12,0) -- (e);
  \draw (dd)+(.25,0) -- (e);

        \draw (dd) -- (g);
  \draw (dd)+(-.25,0) -- (g);
  \draw (dd)+(-.12,0) -- (g);
  \draw (dd)+(.12,0) -- (g);
  \draw (dd)+(.25,0) -- (g);

  \tikzstyle{every node}=[draw,darkgray,fill=white,circle,inner sep=7pt]
  \path (aa) node (aaa){};
  \tikzstyle{every node}=[draw,darkgray,fill=white,circle,inner sep=6pt]
  \path (dd) node (ddd){};

\end{scope}

\end{tikzpicture}
\label{fig:while-blue}
\caption{Left: the original instance of~\CL from \Cref{kernel if set given} constructed from the reachability graph of the considered temporal graph.
Right: the obtained compressed instance of~\CL after exhaustive application of all reduction rules.
In both parts, the blue vertices are the vertices from the inherent transitivity modulator~$B$ and the cycles at the bottom indicate the white clusters.
Note that in both graphs, each blue vertex has neighbors in at most one white cluster (see~\Cref{claim:red_deg_one}).
Intuitively, RR 1 ensures that small clusters are removed, RR 1 and RR 2 ensure that there are no isolated white clusters, and RR 3 reduces the size of each white cluster to at most~$|B| + 1$.
}
\end{figure}
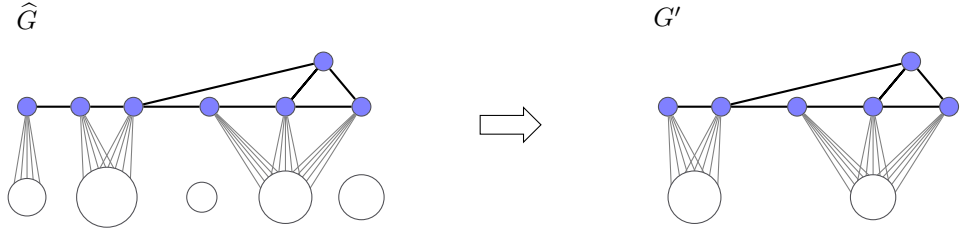
  
\newcommand\proofTwins{
Assume that $u$ and $v$ are adjacent in $G'$ and
assume towards a contradiction that there is a vertex~$w$ in~$G'$ which is adjacent to~$u$ in~$G'$ but not adjacent to~$v$ in~$G'$.
Since~$w$ and~$v$ are not adjacent in~$G'$, $G_R$ contains at most one of the arcs~$(w,v)$ or~$(v,w)$.
Assume without loss of generality that~$(w,v)$ is not an arc of~$G_R$.
Since~$u$ is adjacent to both~$v$ and~$w$ in~$G'$, $G_R$ contains the arcs~$(v,u)$ and~$(u,w)$.
Recall that both~$u$ and~$v$ are white vertices.
This implies that the arc-modification set~$M$ contains no arc incident with any of these two vertices.
Hence, $M$ contains none of the arcs of~$\{(v,u),(u,w),(v,w)\}$, which implies that~$G_R'= (V,A\Delta M)$ is not a transitive graph; a contradiction.
}

\newcommand{\proofRedDegOne}{
Assume towards a contradiction that there is a blue vertex~$w$ which is adjacent to two white vertices~$u$ and~$v$ in~$G'$, such that~$u$ and~$v$ are not part of the same white cluster.
	Since~$u$ and~$v$ are not part of the same white cluster, $u$ and~$v$ are not adjacent in~$G'$ due to~\Cref{claim:real twins}.
	This implies that~$G_R$ contains at most one of the arcs~$(u,v)$ or~$(v,u)$.
Assume without loss of generality that~$(u,v)$ is not an arc of~$G_R$.
Since~$w$ is adjacent to both~$u$ and~$v$ in~$G'$, $G_R$ contains the arcs~$(u,w)$ and~$(w,v)$.
By the fact that both~$u$ and~$v$ are white vertices, $M$ contains no arc of~$\{(u,w),(w,v),(u,v)\}$, which implies that~$G_R' = (V,A\Delta M)$ is not a transitive graph; a contradiction. 
}

\newcommand{\proofWhiteUniversal}{
Assume towards a contradiction that there is a vertex~$w$ in~$C$ that is not adjacent to~$u$.
Since~$C$ is a connected component in~$G'$, this implies that there is a vertex in~$C$ that has distance exactly two with~$u$.
We can assume without loss of generality that~$w$ is this vertex, that is, there is a vertex~$v$ in~$C$ that is adjacent to both~$u$ and~$w$.
	This implies that~$G_R$ contains at most one of the arcs~$(u,w)$ or~$(w,u)$.
Assume without loss of generality that~$(u,w)$ is not an arc of~$G_R$.
Since~$v$ is adjacent to both~$u$ and~$w$ in~$G'$, $G_R$ contains the arcs~$(u,v)$ and~$(v,w)$.
By the fact that~$u$ is a white vertex, $M$ does not contain the arc~$(u,w)$.
Moreover, since~$M$ contains no arc of~$\{(u,v),(v,w)\}$, we obtain that~$G_R' = (V,A\Delta M)$ is not a transitive graph; a contradiction. 
}

\begin{proof}
We first present a compression to~\CL.
Let~$\widehat{G}=(V,E)$ be an undirected graph that contains an edge~$\{u,v\}$ if and only if~$(u,v)$ and~$(v,u)$ are arcs of~$G_R$.
Due to~\Cref{lem:tcc-are-cliques}, $I$ is a yes-instance of~\OTCC if and only if~$(\widehat{G},k)$ is a yes-instance of~\CL.
Let~$W := V \setminus B$.
We call the vertices of~$B$~\emph{blue} and the vertices of~$W$~\emph{white}.
Note that~$G_R[W]$ is a transitive graph, since~$B$ is a transitivity modulator of~$G_R$.
Moreover, there exists an arc set~$M \subseteq B \times B$ such that~$G'_R = (V, A \Delta M)$ is transitive, since~$B$ is an inherent transitivity modulator of~$G_R$.
In the following, we present reduction rules to remove vertices from~$\widehat{G}$ to obtain an equivalent instance~$(G',k')$ of~\CL with $\Oh(|B|^2)$~vertices and where~$G'$ is an induced subgraph of~$\widehat{G}$.
The graphs~$\widehat{G}$ and~$G'$ are conceptually depicted in \Cref{fig:while-blue}. 

To obtain this smaller instance of~\CL, we initialize~$G'$ as a copy of~$\widehat{G}$ and~$k'$ as~$k$ and exhaustively applying three reduction rules.
Our first two reduction rules are the following:
\begin{itemize}
    \item RR 1: Remove a vertex~$v$ from~$G'$, if~$v$ has degree less than~$k'-1$ in~$G'$.
	\item RR 2: If a white vertex has at least~$k'-1$ white neighbors in~$G'$, output a constant size yes-instance. 
\end{itemize}   
Note that the first reduction rule is safe, since no vertex of degree less than~$k'-1$ can be part of a clique of size at least~$k'$.
Moreover, each connected component in~$G'$ has size at least~$k'$ after this reduction rule is exhaustively applied.
The safeness of the second reduction rule relies on the following observation.
\begin{claim}\label{claim:real twins}
If two white vertices~$u$ and~$v$ are adjacent in~$G'$, then they are real twins in~$G'$.
That is, $N_{G'}[u] = N_{G'}[v]$.
\end{claim}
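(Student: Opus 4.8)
The plan is to exploit two structural facts. First, because $G'$ is an induced subgraph of $\widehat{G}$, adjacency of two vertices in $G'$ is equivalent to the presence of \emph{both} arcs between them in $G_R$. Second, and crucially, since $B$ is an \emph{inherent} transitivity modulator, the arc-modification set $M$ satisfies $M \subseteq B \times B$; hence no arc incident to a white vertex lies in $M$. Consequently, for every arc having at least one endpoint in $W$, its membership in the transitive graph $G_R' = (V, A \Delta M)$ coincides with its membership in $A$ (equivalently, in $G_R$). This is the lever that lets us turn adjacency information in $G'$ into transitivity constraints on $G_R'$.

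With this in hand, I would prove $N_{G'}[u] \subseteq N_{G'}[v]$; the reverse inclusion follows by swapping the roles of $u$ and $v$, and $u \in N_{G'}[v]$, $v \in N_{G'}[u]$ hold since $u$ and $v$ are adjacent. So fix a vertex $w$ adjacent to $u$ in $G'$ (the cases $w \in \{u,v\}$ being immediate) and suppose, for contradiction, that $w$ is not adjacent to $v$. Since $u$ and $w$ are adjacent and $u$ and $v$ are adjacent in $\widehat{G}$, we obtain $(u,w),(w,u),(u,v),(v,u) \in A$, whereas the non-adjacency of $w$ and $v$ forces at least one of $(v,w)$ or $(w,v)$ to be absent from $A$.

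The key observation is that every arc appearing in the triangles below has $u$ or $v$—both white—as an endpoint, so by the first paragraph its status in $G_R'$ is identical to its status in $A$. I would then split into two cases according to which arc between $v$ and $w$ is missing from $A$. If $(v,w)\notin A$, then $(v,u),(u,w)\in A(G_R')$ but $(v,w)\notin A(G_R')$, contradicting transitivity of $G_R'$. If instead $(w,v)\notin A$, then $(w,u),(u,v)\in A(G_R')$ but $(w,v)\notin A(G_R')$, again contradicting transitivity. Either way we reach a contradiction, so $w$ must be adjacent to $v$; this yields $N_{G'}[u]\subseteq N_{G'}[v]$, and by symmetry $N_{G'}[u]= N_{G'}[v]$.

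The one point requiring care—and the main obstacle—is that $w$ may itself be blue, so arcs incident to $w$ could be altered by $M$ and their $G_R'$-status need not match their $A$-status. This is precisely why I match the missing arc (either $(v,w)$ or $(w,v)$) to the $2$-path through $u$ whose two present arcs are also anchored at $u$ or $v$: with all three arcs of each triangle incident to a white vertex, $M$ is irrelevant to them, and the contradiction with transitivity of $G_R'$ goes through regardless of the colour of $w$. Getting the direction of the missing arc aligned with the correct $2$-path is the only delicate bookkeeping step.
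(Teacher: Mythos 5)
Your proof is correct and follows essentially the same route as the paper's: both arguments use that $M\subseteq B\times B$ leaves every arc incident to the white vertices $u$ or $v$ untouched, so a missing arc between $v$ and $w$ together with the $2$-path through $u$ violates transitivity of $G_R'=(V,A\Delta M)$. Your explicit two-case split on which of $(v,w)$ or $(w,v)$ is absent is in fact slightly cleaner than the paper's ``without loss of generality'' phrasing, which mismatches the direction of the missing arc with the $2$-path it then uses.
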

\begin{claimproof} 
\proofTwins
\end{claimproof}

Note that this implies that each connected component in~$G'[W]$ is a clique of real twins in~$G'$.
We call each such connected component in~$G'[W]$ a~\emph{white cluster}.

Note that after exhaustive applications of the first two reduction rules, each white cluster has size at most~$k'-1$ and each connected component in~$G'$ has size at least~$k'$.
This implies that each connected component in~$G'$ contains at least one blue vertex.
Since~$G'$ contains at most $|B|$~blue vertices, this implies that~$G'$ has at most $|B|$~connected components.

In the following, we show that no blue vertex has neighbors in more than one white cluster.
This then implies that~$G'$ contains at most $|B| \cdot k'$~vertices.
In a final step, we then show how to reduce the value of~$k'$.
\begin{claim}
	\label{claim:red_deg_one}
	No blue vertex has neighbors in more than one white cluster.
\end{claim}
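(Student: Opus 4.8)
The plan is to argue by contradiction and to derive a forbidden ``missing shortcut'' configuration in the modified graph~$G_R' = (V, A \Delta M)$, contradicting its transitivity. So I would begin by supposing that some blue vertex~$w$ is adjacent in~$G'$ to two white vertices~$u$ and~$v$ that lie in \emph{distinct} white clusters. The first step is to observe that~$u$ and~$v$ cannot be adjacent in~$G'$: if they were, then by~\Cref{claim:real twins} they would be real twins, hence in the same connected component of~$G'[W]$, i.e.\ the same white cluster, contradicting our assumption.

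The second step translates this combinatorial non-adjacency back into the directed reachability graph. Since edges of~$\widehat{G}$ (and hence of the induced subgraph~$G'$) correspond to the presence of \emph{both} arcs in~$G_R$, the non-adjacency of~$u$ and~$v$ means that~$G_R$ contains at most one of the arcs~$(u,v)$, $(v,u)$; without loss of generality assume~$(u,v) \notin A$. Conversely, since~$w$ is adjacent to both~$u$ and~$v$ in~$G'$, all four arcs between~$w$ and these vertices are present in~$G_R$; in particular~$(u,w) \in A$ and~$(w,v) \in A$. This gives a directed path~$u \to w \to v$ in~$G_R$ whose ``shortcut'' arc~$(u,v)$ is absent.

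The crucial step, and the one that exploits the \emph{inherent} nature of the modulator~$B$, is to transfer this configuration unchanged into~$G_R'$. Because~$u$ and~$v$ are white and the arc-modification set satisfies~$M \subseteq B \times B$, none of the three arcs~$(u,w)$, $(w,v)$, $(u,v)$---each of which has~$u$ or~$v$ as an endpoint---can belong to~$M$. Hence their membership in~$A$ is preserved under the symmetric difference, so~$G_R'$ contains~$(u,w)$ and~$(w,v)$ but not~$(u,v)$. This violates the transitivity of~$G_R'$, the contradiction that closes the argument.

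I expect the main obstacle, or rather the main thing to state carefully, to be precisely this last point: one must verify that every arc in the offending triple is incident to a white vertex, so that the restriction~$M \subseteq B \times B$ guarantees these arcs are untouched by the modification. This is exactly where the stronger hypothesis (an inherent modulator, rather than an arbitrary one) does the work, and it is worth making explicit that the whole-system transitivity of~$G_R'$---not merely the transitivity of~$G_R[W]$---is what we contradict.
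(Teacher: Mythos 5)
Your proof is correct and follows essentially the same route as the paper's: derive non-adjacency of~$u$ and~$v$ from \Cref{claim:real twins}, extract the directed path~$u \to w \to v$ with the missing arc~$(u,v)$ in~$G_R$, and use~$M \subseteq B \times B$ to conclude that none of the three relevant arcs is touched by the modification, contradicting the transitivity of~$(V, A \Delta M)$. No gaps.
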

\begin{claimproof}
\proofRedDegOne
\end{claimproof}

As mentioned above, this implies that~$G'$ contains at most~$|B| \cdot k'$ vertices.
Next, we show how to reduce the size of the white clusters if~$k' > |B|$.
To this end, we introduce our last reduction rule:

\begin{itemize}
\item RR 3: If~$k' > |B| +1$, remove an arbitrary white vertex from each white cluster and reduce~$k'$ by 1. 
\end{itemize}

Note that RR 3 is safe:
If~$k' > |B|+1$, a clique of size~$k'$ in~$G'$ has to contain at least two white vertices, since~$G'$ contains at most~$|B|$ blue vertices.
Since no clique in~$G'$ can contain vertices of different white clusters, we reduce the size of a maximal clique of size at least~$k'$ in~$G'$ by exactly one, when removing one vertex of each white cluster.

Hence, after all reduction rules are applied exhaustively, the resulting instance~$(G',k')$ of~\CL contains at most~$|B|$ blue vertices and at most~$|B|$ white clusters.
Each such white cluster hast size at most~$|B|+1$.
This implies that the resulting graph~$G'$ contains $\Oh(|B|^2)$~vertices and $\Oh(|B|^3)$~edges, since each vertex has degree $\Oh(|B|)$.

Based on known polynomial-time reduction~\cite{BF03}, we can compute for an instance~$(G^*,k^*)$ of~\CL, an equivalent instance~$(\mathcal{G}^*,k^*)$ of~\OTCC, where~$\mathcal{G}^*$ is a proper temporal graph and has $\Oh(n+m)$~vertices and edges, where~$n$ and~$m$ denote the number of vertices and the number of edges of~$G^*$, respectively.
Since~$G'$ has $\Oh(|B|^2)$~vertices and $\Oh(|B|^3)$~edges, this implies that we can obtain an equivalent instance of~\OTCC of total size~$\Oh(|B|^3)$ in polynomial time.
By the fact that~$\mathcal{G}^*$ is a proper temporal graph, this works for all problem versions of~\OTCC.
This implies the stated kernelization result.
\end{proof}

Based on the fact that the set~$B$ of endpoints of any arc-modification set~$M$ towards a transitive graph is an inherent transitivity modulator of size at most~$2\cdot |M|$, this implies the following for kernelization algorithms with respect to arc-modification sets towards a transitive graph.

\begin{theorem}\label{Theorem kernel if set given}
Let~$I=(\mathcal{G},k)$ be an instance of~\OTCC and let~$G_R = (V,A)$ be the reachability graph of~$\mathcal{G}$.
Moreover, let~$M \subseteq V \times V$ be a set of arcs such that~$G'_R = (V, A \Delta M)$ is transitive.
Then, for each version of~\OTCC, one can compute in polynomial time an equivalent instance of total size~$\Oh(|M|^3)$.
\end{theorem}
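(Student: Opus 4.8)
The plan is to derive \Cref{Theorem kernel if set given} as an immediate corollary of \Cref{kernel if set given}. The key observation is that the more general statement we already proved only requires an \emph{inherent} transitivity modulator~$B$ of the reachability graph~$G_R$, and such a modulator can be extracted directly from the given arc-modification set~$M$. Concretely, I would define $B$ to be the set of all endpoints of arcs in~$M$, that is, $B := \{u,v : (u,v) \in M\}$. First I would verify that $B$ is indeed an inherent transitivity modulator of~$G_R$: by definition, $M \subseteq V \times V$ satisfies that $(V, A \Delta M)$ is transitive, and by our choice of~$B$ every arc of~$M$ has both of its endpoints in~$B$, so $M \subseteq B \times B$. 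This is precisely the condition in the definition of an inherent transitivity modulator, so $B$ qualifies (one also checks $G_R - B$ is transitive, which follows since the endpoints of an arc-modification set form a transitivity modulator, as already noted right after the definition of the parameters).

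Second, I would bound the size of~$B$. Each arc in~$M$ contributes at most two endpoints, so $|B| \le 2 \cdot |M|$. Thus $|B| \in \Oh(|M|)$.

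Third, I would simply invoke \Cref{kernel if set given} with this modulator~$B$. That theorem guarantees that, for each version of~\OTCC, one can compute in polynomial time an equivalent instance of total size~$\Oh(|B|^3)$. Substituting the bound $|B| \le 2 \cdot |M|$ yields an equivalent instance of total size $\Oh(|M|^3)$, which is exactly the claimed bound. Since the construction in \Cref{kernel if set given} is polynomial-time and the set~$B$ is computable in linear time from~$M$, the whole procedure runs in polynomial time.

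I do not expect any genuine obstacle here, since all the real work (the three reduction rules, the twin structure of white clusters, and the final compression back to a proper temporal graph via the Bhadra--Ferreira construction) is already carried out in the proof of \Cref{kernel if set given}. The only points requiring care are the bookkeeping ones: confirming that the endpoint set~$B$ satisfies the \emph{inherent} condition $M \subseteq B \times B$ (rather than the weaker plain transitivity-modulator condition), and tracking that the cubic dependence on~$|B|$ transfers cleanly to a cubic dependence on~$|M|$ after the factor-two blow-up. Both are routine, so the statement follows directly.
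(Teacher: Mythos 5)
Your proposal is correct and follows exactly the paper's own route: the paper derives this theorem from \Cref{kernel if set given} by observing that the endpoint set of~$M$ is an inherent transitivity modulator of size at most~$2\cdot|M|$. Nothing is missing.
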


Moreover, if the arc-modification set~$M$ only adds arcs to the reachability graph, we can obtain the further even better kernelization result.

\begin{lemma}\iflong\else[$\star$]\fi
Let~$I=(\mathcal{G},k)$ be an instance of~\OTCC and let~$G_R = (V,A)$ be the reachability graph of~$\mathcal{G}$.
Moreover, let~$M \subseteq V \times V$ be a set with~$A\cap M = \emptyset$ of arcs such that~$G'_R = (V, A \Delta M)$ is transitive.
Then, for each version of~\OTCC, one can compute in polynomial time an equivalent instance of total size~$\Oh(|M|^2)$. 
\end{lemma}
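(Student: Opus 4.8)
The plan is to reuse the clique compression underlying \Cref{kernel if set given} and to exploit the extra structure granted by the addition-only hypothesis, so that the output temporal graph can be built with \emph{sparse} gadgets, saving a factor of $|M|$. First I would let $B$ be the set of endpoints of $M$ (an inherent transitivity modulator with $|B| \le 2|M|$) and run the reduction rules RR~1--RR~3 of \Cref{kernel if set given} to obtain an equivalent \CL instance $(G',k')$ with at most $|B|$ blue vertices, at most $|B|$ white clusters, each of size at most $|B|+1$, where $G'$ is an induced subgraph of $\widehat G$. The key observation is that, since $A\cap M=\emptyset$ forces $M\subseteq B\times B$, the graph $\widehat G$ (and hence $G'$) is a disjoint union of components, each consisting of a white clique $W_i$ completely joined to a blue graph $\widehat G[B_i]$: every white vertex is universal to its component, and blue vertices of different components are non-adjacent. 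Consequently $\sum_i |W_i| = \Oh(|B|^2)$, and the number of blue--blue and blue--white edges of $G'$ is also $\Oh(|B|^2)$; only the $\sum_i \binom{|W_i|}{2}=\Oh(|B|^3)$ white--white edges are responsible for the cubic bound in \Cref{Theorem kernel if set given}.

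To avoid materialising these cliques, I would not convert $G'$ directly. Instead, let $G''$ be $G'$ with all white--white edges removed; it has only $\Oh(|B|^2)$ edges. Applying the known polynomial-time reduction of Bhadra and Ferreira~\cite{BF03} to $G''$ yields a proper temporal graph $\mathcal H_0$ of size $\Oh(|B|^2)$ whose tccs are exactly the cliques of $G''$, with all its (semaphore) gadgets confined to a single late time window. To repair the missing white--white reachability, for each white cluster $W_i$ I would add a fresh hub vertex $h_i$ together with a \emph{gather--scatter} schedule: $\Oh(|W_i|)$ arcs/edges, placed in an \emph{earlier} time window, that send every vertex of $W_i$ into $h_i$ before $h_i$ forwards to every vertex of $W_i$. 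This makes $W_i\cup\{h_i\}$ a tcc using only $\Oh(|W_i|)$ temporal edges, so the overall size stays $\Oh(\sum_i |W_i| + |B|^2)=\Oh(|B|^2)=\Oh(|M|^2)$.

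It then remains to argue that the reachability graph of the resulting temporal graph $\mathcal H$, restricted to $V(G')$, equals $G'$. The intended reachabilities are immediate: white--white through the hubs, and white--blue together with the correct blue--blue pairs through the Bhadra--Ferreira gadgets. The delicate point---and the main obstacle---is to show that no \emph{spurious} reachability is created; most dangerously, that two non-adjacent blue vertices of the same component cannot reach one another through a shared universal white vertex via a chain $b\leadsto w\leadsto b'$. Here the addition-only structure pays off: choosing the hub window strictly before the semaphore window guarantees that a temporal path using a semaphore leg can never be continued by a hub leg (and vice versa), so such chains never compose; combined with the non-chaining property of the Bhadra--Ferreira gadgets, this forces the reachability graph on $V(G')$ to be exactly $G'$. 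The hub vertices are adjacent only to their own clusters and the gadget vertices are local, so they do not enlarge any bidirectional clique (recall each surviving component has $|B_i|\ge 1$, so $W_i\cup\{h_i\}$ of size $|W_i|+1$ never beats $W_i$ plus a clique in $B_i$). Hence $\mathcal H$ has a tcc of size $k'$ if and only if $G'$ has a clique of size $k'$, i.e.\ if and only if $I$ is a yes-instance. Finally, spreading the hub edges over distinct time steps keeps $\mathcal H$ proper, so the argument applies uniformly to the strict/non-strict and directed/undirected settings. The bulk of the write-up will be the careful time-labelling making the window-separation and properness claims precise in all four settings.
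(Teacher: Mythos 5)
Your plan is sound and arrives at the right bound, but it takes a genuinely different route from the paper's proof. Both arguments start the same way: take $B$ to be the set of endpoints of $M$, run RR~1--RR~3 from \Cref{kernel if set given}, and use the addition-only hypothesis ($A\cap M=\emptyset$, so $M$ touches no existing arc and no white vertex) to show that every white vertex is universal to its connected component --- exactly your structural observation that each component is one white clique completely joined to its blue part. From there the paper stays entirely at the \CL level: it discards the white clusters and builds a new graph $G''$ consisting of $G'[B]$ plus a \emph{single} shared white clique $W^*$ of size $\max_i |C_i\cap W|$, attaching the blue vertices of component $C_i$ to the first $|C_i\cap W|$ vertices of $W^*$ (a nested-neighborhood encoding of the cluster sizes). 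This gives a \CL instance with $\Oh(|B|)$ vertices and $\Oh(|B|^2)$ edges, after which the \CL-to-\OTCC reduction of~\cite{BF03} is invoked as a black box, so correctness in all four settings is inherited from the cited reduction. You instead keep all $\Oh(|B|^2)$ white vertices and sparsify the \emph{temporal encoding}: delete the white--white edges before applying the reduction and restore intra-cluster reachability with per-cluster hub gadgets in an earlier time window. This also yields total size $\Oh(|M|^2)$, and your safety analysis is essentially right --- the window separation prevents semaphore-then-hub composition, hub-then-semaphore composition is harmless because same-cluster white vertices are real twins, and $W_i\cup\{h_i\}$ is dominated by $W_i$ plus one blue vertex (which is where universality, hence $A\cap M=\emptyset$, is needed). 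The trade-off is that by modifying the temporal construction you forfeit the black-box use of~\cite{BF03}: you must re-verify, separately in the strict/non-strict and directed/undirected settings, that the semaphore gadgets do not chain with each other or with the hubs, that gadget vertices and hubs never enter a tcc of size at least $k'$ (which also forces a separate treatment of small constant $k$), and that the time labels can be spread out so the result stays proper. None of these steps fails, but they are real additional proof burden that the paper's nested-clique merge avoids entirely by never touching the temporal gadgets.
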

\iflong
\begin{proof}
First, we perform the compression to~\CL described in~\Cref{kernel if set given}.
Let~$B$ be the endpoints of the arcs of~$M$ and let~$W := V \setminus B$.
Note that~$B$ is a inherent transitivity modulator of~$G_R$ of size at most~$2\cdot |M|$.

Again, we call the vertices of~$B$~\emph{blue} and the vertices of~$W$~\emph{white}.
Let~$(G',k')$ be the instance of~\CL resulting after exhaustively applying the reduction rules RR 1, RR 2, and RR 3 to the instance~$(\widehat{G},k)$ (see \Cref{kernel if set given}).
Recall that~$G'$ contains at most~$|B|$ blue vertices and at most~$|B|$ white clusters.
Moreover, each white cluster has size at most~$|B| + 1$, no vertices of different white clusters are adjacent, and each blue vertex has neighbors in at most one white cluster.

In the following, we show how to improve the kernel-size, knowing that~$M$ contains no arc of~$G_R$.
To this end, we observe that each white vertex is adjacent to each vertex of its connected component in~$G'$.
\begin{claim}\iflong\else[$\star$]\fi\label{white are universal}
Let~$u$ be a white vertex and let~$C$ be the connected component of~$u$ in~$G'$.
Then, $u$ is adjacent to each other vertex of~$C$.
\end{claim}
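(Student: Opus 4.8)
The plan is to argue by contradiction, exploiting the two features special to the addition-only setting: first, that $A\cap M=\emptyset$ makes $G'_R=(V,A\Delta M)$ equal to $(V,A\cup M)$, so an arc lies in $A\Delta M$ exactly when it lies in $A$ or in $M$; and second, that a white vertex is by definition not an endpoint of any arc of $M$, so \emph{no} arc of $M$ is incident with $u$. Together these say that, as far as $u$ is concerned, the transitive witness $A\Delta M$ and the raw reachability $A$ carry the same incident arcs, which is what should force $u$ to be universal in its component.

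Concretely, suppose for contradiction that $u$ has a non-neighbor $w\neq u$ in its component $C$. Taking a shortest $u$--$w$ path in $C$ (which has length at least two, as $u$ and $w$ are non-adjacent), its second vertex $v$ and third vertex $w'$ give a vertex $w'$ at distance exactly two from $u$ together with a common neighbor $v$ adjacent to both $u$ and $w'$; I replace $w$ by $w'$. Since $G'$ is an induced subgraph of $\widehat{G}$ and $u,w$ are non-adjacent in $G'$, they are non-adjacent in $\widehat{G}$, so $G_R$ is missing at least one of $(u,w),(w,u)$; assume without loss of generality $(u,w)\notin A$, the other case being symmetric. From $v$ being adjacent to both $u$ and $w$ in $\widehat{G}$ I read off $(u,v),(v,w)\in A$, and since $A\cap M=\emptyset$ these two arcs also lie in $A\Delta M$. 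On the other hand $(u,w)\notin A$, and because $u$ is white $(u,w)\notin M$, so $(u,w)\notin A\cup M=A\Delta M$. But $A\Delta M$ is transitive, and $(u,v),(v,w)\in A\Delta M$ force $(u,w)\in A\Delta M$ --- a contradiction. Hence $u$ is adjacent to every other vertex of $C$.

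I expect the only genuinely delicate point to be the very first step, namely pinning down a distance-two witness $v$: one must argue that whenever $u$ has any non-neighbor in $C$ it has one at distance exactly two, which is what makes a single application of transitivity suffice. After that the argument is a routine instantiation of transitivity of $A\Delta M$, resting on the two membership facts $(v,w)\in A\Delta M$ (from $A\cap M=\emptyset$) and $(u,w)\notin M$ (from $u$ being white). It is precisely the addition-only hypothesis that strengthens the earlier, weaker statement available for arbitrary modulators into this universality property, and hence permits the tighter $\Oh(|M|^2)$ bound.
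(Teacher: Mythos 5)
Your proposal is correct and follows essentially the same route as the paper's proof: reduce to a distance-two witness $w$ with a common neighbor $v$, note that $(u,v),(v,w)\in A$ survive into $A\Delta M$ because $A\cap M=\emptyset$, and that $(u,w)$ cannot be in $A\Delta M$ because it is missing from $A$ and cannot lie in $M$ since $u$ is white, contradicting transitivity. The only cosmetic difference is that you make the shortest-path extraction of the distance-two witness explicit where the paper states it in one line.
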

\iflong
\begin{claimproof}
\proofWhiteUniversal
\end{claimproof}
\fi
Note that this further implies that if two blue vertices are adjacent in~$G'$, then both of these vertices have the same white neighbors.
Based on this observation, we now show how to obtain an equivalent instance~$(G'',k'')$ of~\CL, where~$G''$ has only $\Oh(|B|)$~vertices.

Let~$C_1, \dots, C_\ell$ be the connected components of~$G'$ and for each~$i\in [1,\ell]$, let~$\alpha_i$ denote the number of white vertices of~$C_i$, that is, $\alpha_i := |C_i\cap W|$.
We obtain the graph~$G''$ as follows:
Initially, we set~$G'' := G'[B]$.
Afterwards, we add a clique~$W^*$ of size~$\max_{i\in [1,\ell]} \alpha_i$ to~$G''$.
We fix some arbitrary order on the vertices of~$W^*$ and add for each~$i\in [1,\ell]$ an edge between each blue vertex of~$C_i$ and each of the first~$\alpha_i$ vertices of~$W^*$ to~$G''$.
Finally, we set~$k'':= k'$.
Note that~$G''$ contains $\Oh(|B|)$ vertices and $\Oh(|B|^2)$~edges, since the size of~$W^*$ is the size of the largest white cluster in~$G'$, which is~$\Oh(|B|)$.
Based on the above mentioned reduction to~\OTCC and the fact that~$|B| \leq 2\cdot |M|$, we can thus obtain an equivalent instance of~\OTCC with a proper temporal graph of total size~$\Oh(|M|^2)$.

To show the correctness of this algorithm it remains to show that~$G'$ contains a clique of size~$k'$ if and only if~$G''$ contains a clique of size~$k'$.

$(\Rightarrow)$
Let~$S$ be a clique of size~$k'$ in~$G'$.
Moreover, let~$C_i$ be the connected component of~$G'$ that contains~$S$.
By construction, the first~$\alpha_i = |C_i \cap W|$ vertices of~$W^*$ are adjacent to each blue vertex of~$S$.
Since~$S$ is a clique of size~$k'$ in~$G'$, this implies that~$(S\cap B) \cup W^*_{\leq \alpha_i}$ is a clique of size at least~$k'$ in~$G''$, where $W^*_{\leq \alpha_i}$ denotes the first~$\alpha_i$ vertices of~$W^*$.

$(\Leftarrow)$
Let~$S$ be a clique of size~$k'$ in~$G''$.
Since the second reduction rule is exhaustively applied, $k'$ exceeds the size of~$W^*$.
Consequently, $S$ contains at least one blue vertex.
Since the adjacency between blue vertices is the same in both~$G'$ and~$G''$, all white vertices of~$S$ belong to the same connected component of~$G'$, that is, there is some~$i\in [1,\ell]$, such that~$S\cap B \subseteq C_i$.
Since each vertex of~$C_i \cap B$ is adjacent only to~$\alpha_i = |C_i|$ vertices of~$W^*$, $(S\cap B) \cup (C_i \cap W)$ is a clique of size at least~$k'$ in~$G'$.
\end{proof}
\fi 

Hence, if we are given an arc-modification set~$M$ of size~$\tae$, we can compute a polynomial kernel.
Unfortunately, finding a minimum-size arc-modification set of a given directed graph is~\NP-hard~\cite{WKNU12} and no polynomial-factor approximations are known that run in polynomial time.
Hence, we cannot derive a polynomial kernel for the parameter~$\tae$.
Positively, if we only consider arc-additions, we can compute the transitive closure of a given directed graph in polynomial time.
This implies that we can find a minimum-size arc-modification set towards a transitive reachability graph in polynomial time among all such sets that only add arcs to to reachability graph.
Consequently, we derive the following.
\begin{corollary}
\OTCC admits a kernel of size~$\Oh(\taa^2)$, where~$\taa$ denotes the minimum number of necessary arc-additions to make the respective reachability graph transitive.
\end{corollary}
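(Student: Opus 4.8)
The plan is to piggyback on the compression to \CL developed in the proof of \Cref{kernel if set given} and then exploit the extra structure granted by the hypothesis $A \cap M = \emptyset$ to shave a factor of $|B|$ off the vertex count. Concretely, I would first set $B$ to be the set of endpoints of the arcs in $M$, so that $|B| \le 2|M|$ and $B$ is an inherent transitivity modulator of $G_R$, and then run the reduction rules RR 1, RR 2, and RR 3 exactly as before. This already yields an equivalent instance $(G',k')$ of \CL with at most $|B|$ blue vertices, at most $|B|$ white clusters each of size at most $|B|+1$, where distinct white clusters are non-adjacent and every blue vertex meets at most one white cluster (\Cref{claim:real twins} and \Cref{claim:red_deg_one}). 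The goal is then to compress the white part further, from $\Oh(|B|^2)$ white vertices down to $\Oh(|B|)$.

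The key new ingredient I would prove is that in the addition-only setting every white vertex is \emph{universal} inside its connected component of $G'$. I would argue by a transitivity-violation argument: if a white vertex $u$ missed some vertex $w$ of its component, I pick $w$ at distance two from $u$ witnessed by a common neighbour $v$, so that $G_R$ contains $(u,v)$ and $(v,w)$ while it contains at most one of $(u,w)$, $(w,u)$. Because $u$ is white, $M$ cannot contain the missing arc $(u,w)$; and because $M$ adds only non-arcs of $G_R$ (this is exactly where $A \cap M = \emptyset$ enters), $M$ cannot contain the already-present arcs $(u,v)$ or $(v,w)$ either. Hence $(V, A \Delta M) = (V, A \cup M)$ would fail to be transitive, a contradiction. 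I expect this to be the crux, since it is the only place the addition-only hypothesis is consumed, and it is precisely what fails in the general modification case, where a deletion could destroy the path $u \to v \to w$. A direct consequence is that each component of $G'$ contains at most one white cluster and that two adjacent blue vertices share the same white neighbours.

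With universality in hand, I would collapse the white vertices component by component. Writing $C_1,\dots,C_\ell$ for the components of $G'$ and $\alpha_i := |C_i \cap W|$, I build $G''$ by starting from $G'[B]$, adding a single clique $W^*$ of size $\max_i \alpha_i \le |B|+1$, and joining, for each $i$, every blue vertex of $C_i$ to the first $\alpha_i$ vertices of $W^*$; I set $k'' := k'$. Since $\max_i \alpha_i \in \Oh(|B|)$, the graph $G''$ has only $\Oh(|B|)$ vertices and $\Oh(|B|^2)$ edges. Correctness reduces to showing $G'$ has a clique of size $k'$ iff $G''$ does: the forward direction maps a clique contained in some $C_i$ to its blue part together with the first $\alpha_i$ vertices of $W^*$, which is a clique directly by construction of $G''$; the backward direction uses that RR 2 forces $k'$ to exceed $|W^*|$, so any $k'$-clique of $G''$ contains a blue vertex, which pins its blue part into a single component $C_i$, after which universality lets me replace the $W^*$-vertices by the genuine white cluster $C_i \cap W$ and recover a clique of $G'$ of size at least $k'$.

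Finally, I would convert $(G'', k'')$ back into an equivalent \OTCC instance via the polynomial-time \CL-to-\OTCC reduction of \cite{BF03}, which produces a proper temporal graph of size linear in $|V(G'')| + |E(G'')| = \Oh(|B|^2) = \Oh(|M|^2)$ and therefore works uniformly across all versions of \OTCC. Beyond the universality claim, the only genuinely delicate point is verifying the backward direction of the collapse, i.e.\ that a clique of $G''$ cannot secretly combine $W^*$-vertices with blue vertices drawn from two different components; this is exactly what the shared-white-neighbourhood corollary of universality rules out.
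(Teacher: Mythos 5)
Your proposal is correct and follows essentially the same route as the paper: the same reduction rules, the same universality claim for white vertices (proved by the same transitivity-violation argument exploiting $A\cap M=\emptyset$), the same collapse of white clusters into a single clique $W^*$, and the same back-translation to \OTCC via the reduction of~\cite{BF03}. The only piece you leave implicit is that for the corollary (as opposed to the lemma where $M$ is handed to you) one must observe that a minimum-size arc-\emph{addition} set is just the transitive closure minus the existing arcs and is therefore computable in polynomial time, which is how the paper obtains the bound in terms of $\taa$ rather than a given set.
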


\section{Limits of these Parametrizations for Closed TCCs}
\label{sec:limits}
So far, the temporal paths that realize the reachability between two vertices in a tcc could lie outside of the temporal connected component. 
If we impose the restriction that those temporal paths must be contained in the tcc, the problem of finding a large tcc becomes \NP-hard even when the reachability graph is missing only a single arc to become a complete bidirectional clique:
In other words, the problem becomes~\NP-hard even if~$\tvd = \tae = 1$.
On general temporal graphs, all versions of~\CTCC are known to be~\NP-hard~\cite{Cast18,CLMS23}.

\begin{theorem}\label{ctcc is hard}
	For each version of~\CTCC, there is a polynomial time self-reduction that transforms an instance~$(\mathcal{G},k)$ with~$k> 4$ of that version of~\CTCC into an equivalent instance~$(\mathcal{G}',k)$, such that the reachability graph of~$\mathcal{G}'$ is missing only a single arc to be a complete bidirectional clique. 
\end{theorem}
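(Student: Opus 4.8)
The plan is to design a self-reduction that takes an arbitrary instance $(\mathcal{G},k)$ of a given version of~\CTCC and ``pads'' it with a small gadget so that the reachability graph of the new temporal graph~$\mathcal{G}'$ becomes a bidirectional clique except for one missing arc, while the answer is preserved. First I would note the key difference from the \OTCC situation: because closed components require the witnessing temporal paths to stay \emph{inside} the component, I cannot simply add vertices that everyone reaches for free; I must add vertices that are globally reachable (so they raise the reachability graph to a near-clique) yet cannot be absorbed into any large closed tcc without destroying the certificate. The natural approach is to introduce a set~$P$ of fresh ``padding'' vertices together with time labels arranged so that (i)~every pair of vertices of~$\mathcal{G}'$ reaches each other via temporal paths (thereby making $G_R$ a complete bidirectional clique up to a single designated non-arc), but (ii)~any closed tcc is forced to avoid all but a bounded number of padding vertices, so that a closed tcc of size~$k$ in~$\mathcal{G}'$ exists if and only if one exists in~$\mathcal{G}$.

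The concrete construction I would attempt is to append the padding activity \emph{after} the lifetime~$L$ of~$\mathcal{G}$, using fresh time steps $L+1, L+2, \dots$. In these late snapshots I would wire up edges so that every original vertex can reach every padding vertex and vice versa, using the original vertices as intermediaries; the point of placing this activity strictly after all of~$\mathcal{G}$'s edges is that the padding paths cannot be used to shortcut reachability among original vertices during the original time window, so the internal reachability structure of~$\mathcal{G}$ is untouched. To make the reachability graph miss exactly one arc, I would single out two padding vertices, say $p^*$ and $q^*$, and arrange their incident labels so that $p^*$ reaches $q^*$ (or the reverse) via every route except that the one direct witnessing path is blocked, giving a single ordered pair that is absent from~$R$ while every other ordered pair is present. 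The bound $k>4$ should enter here to guarantee there is enough room in the gadget: the padding vertices should be arranged so that no closed tcc containing a padding vertex can have size exceeding~$4$ (for instance because the only internal certificates among padding vertices are short and cannot be chained), forcing any large closed tcc to lie within the original vertex set.

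The verification splits into the two standard directions. For the forward direction, a closed tcc~$C$ of size~$k$ in~$\mathcal{G}$ remains a closed tcc in~$\mathcal{G}'$, since the added activity occurs after time~$L$ and hence does not interfere with the internal temporal paths witnessing closure within~$C$; the labels only grow the ambient reachability graph and never shorten an existing internal certificate. For the reverse direction, I would use property~(ii): any closed tcc~$C'$ of size~$k>4$ in~$\mathcal{G}'$ can contain at most a constant number of padding vertices, and in fact I would argue it must be disjoint from~$P$ because a padding vertex inside~$C'$ would require an internal witnessing path through~$P$ that the gadget forbids for components of size exceeding the threshold. Hence $C'\subseteq V(\mathcal{G})$, and because the padding edges lie after time~$L$ and point only to/from~$P$, the internal certificates for $C'$ in~$\mathcal{G}'$ use only original edges and therefore already witness closure in~$\mathcal{G}$.

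The main obstacle, and where I expect most of the technical care to go, is the simultaneous enforcement of three competing requirements on the gadget: the reachability graph must be pushed all the way up to a complete bidirectional clique \emph{minus exactly one arc} (not fewer, not more), the internal reachability among original vertices must be preserved verbatim, and no padding vertex may be co-optable into a large \emph{closed} component. These pull in opposite directions because making padding vertices globally reachable (for the clique property) tends to make them easy to absorb into components, which is exactly what I must prevent for the closedness argument. Getting the single-missing-arc condition to hold robustly across all four settings---strict/non-strict and directed/undirected, and moreover through \emph{proper} temporal graphs---will require a careful, uniform labelling scheme, and I would expect to verify the $\omin$/$\imax$ style time-window bounds (as in the proof of~\Cref{no kernel directed}) to certify precisely which ordered pairs are and are not in~$R$.
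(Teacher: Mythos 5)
Your high-level strategy coincides with the paper's: append a gadget of fresh vertices whose edge activity lies entirely in time steps after the lifetime~$L$ of~$\mathcal{G}$, arrange the labels so that the reachability graph of~$\mathcal{G}'$ becomes a bidirectional clique minus a single arc between two gadget vertices, and show that no gadget vertex can belong to a closed tcc of size~$k>4$, so that large closed tccs are confined to~$V(\mathcal{G})$ and their internal certificates transfer in both directions. The forward direction and the final step of the reverse direction are argued correctly.

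The genuine gap is that the proposal stops exactly where the proof begins: the gadget is never constructed, and the properties you label as ``competing requirements'' are precisely the statements that must be proven rather than postulated. Two of them are nontrivial and your sketch does not indicate how they would be met. First, the near-clique property requires that \emph{every} ordered pair of original vertices becomes an arc of the reachability graph, including pairs that could not reach each other in~$\mathcal{G}$; this forces the gadget to act as a universal relay (the paper adds a copy~$v'$ of each~$v\in V$ and a path~$x_1,x_2,x_3$, with labels~$L+1$, $L+1+v$, $L+n+2$, $L+n+3$, $L+n+3+v$, $L+2n+4$ so that $u\to u'\to x_1\to x_2\to x_3\to v'\to v$ is a strict temporal path for all~$u,v$, the unique missing arc being~$(x_3,x_1)$). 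Your remark that the padding leaves ``the internal reachability structure of~$\mathcal{G}$ untouched'' is therefore not accurate: open reachability among original vertices changes completely; only the set of temporal paths confined to~$V$ is preserved, which is the correct but strictly weaker invariant needed for closed tccs. Second, the exclusion of gadget vertices from large closed components is not because ``internal certificates among padding vertices are short''; it is an interlocking chain of claims that depends on the single missing arc: $x_2$ cannot lie in~$S$ because entering it needs~$x_1$ and leaving it needs~$x_3$, and~$S$ cannot contain both since~$(x_3,x_1)$ is absent; every temporal path between two distinct primed vertices passes through~$x_2$, so~$S$ contains at most one primed vertex; and~$x_1\in S$ (or~$x_3\in S$) would force two distinct primed vertices into~$S$ to realize internal paths to two original members of~$S$, which exist because~$k>4$. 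Without a concrete labelling scheme none of these assertions can be verified, and designing one that simultaneously satisfies all three constraints across the strict, non-strict, directed, and undirected settings is the substance of the theorem, not a routine detail.
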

\newcommand{\proofNearlyComplete}{
First, we show that~$(x_3,x_1)$ is not an arc of~$G_R'$.
By construction of~$\mg'$, (i)~no edge of~$\mg'$ that is incident with~$x_1$ exists in any time step larger than~$L + n + 2$, and (ii)~no edge of~$\mg'$ that is incident with~$x_3$ exists in any time step smaller than~$L + n + 3$.
This implies that no temporal path in~$\mg'$ that starts in~$x_3$ can reach~$x_1$.
Hence, $(x_3,x_1)$ is not an arc of~$G'_R$.

Next, we show that~$G_R'$ contains all other possible arcs.
To this end, we present strict temporal paths in~$\mg'$ that guarantee the existence of these arcs in~$G_R'$.
Let~$u$ and~$v$ be vertices of~$V$.
Consider the temporal path~$P$ that starts in vertex~$u$ and traverses
\begin{itemize}
\item the edge~$\{u,u'\}$ in time step~$L+1$,
\item the edge~$\{u',x_1\}$ in time step~$L+1+u$,
\item the edge~$\{x_1,x_2\}$ in time step~$L+n+2$,
\item the edge~$\{x_2,x_3\}$ in time step~$L+n+3$,
\item the edge~$\{x_3,v'\}$ in time step~$L+n+3+v$, and
\item the edge~$\{v',v\}$ in time step~$L+2 n+4$.
\end{itemize} 
Since each vertex of~$V$ is a natural number of~$[1,n]$, the times steps in which these edges are traversed by~$P$ are strictly increasing, which implies that~$P$ is a strict temporal path of~$\mg'$.
Note that each suffix and each prefix of~$P$ is also a strict temporal path in~$\mg'$.
Hence, this implies that~$G_R'$ contains all the arcs~$\{(u,v'),(u,v),(u',v'),(u',v)\} \cup \{(u',x_1),(u',x_1),(x_1,v'),(x_1,v)\mid i\in \{1,2,3\}\} \cup \{(x_1,x_3)\}$.
Moreover, since~$\{x_1,x_2\}$ and~$\{x_2,x_3\}$ are edges in~$\mg'$, $G_R'$ also contains the arcs~$\{(x_1,x_2),(x_2,x_1),(x_2,x_3),(x_3,x_2)\}$.
This implies that~$(x_3,x_1)$ is the unique arc that is missing in~$G_R'$.
}
\newcommand{\proofXNotInS}{
Since~$k>4$, $S$ contains at least one vertex~$v^*$ which is not contained in~$\{x_1,x_2,x_3\}$.
Note that due to~\Cref{nearly complete}, $S$ contains at most one of~$x_1$ and~$x_3$.
We distinguish two cases.

\textbf{Case 1:} $x_1$ is not contained in~$S$\textbf{.}
We show that there is no temporal path from~$v^*$ to~$x_2$ in~$\mg'$ that does not visit~$x_1$.
This then implies that~$x_2$ is not contained in~$S$.
Recall that~$\{x_2,x_3\}$ is the unique edge incident with~$x_2$ in~$\mg'$ that does not have~$x_1$ as an endpoint.
Moreover, recall that this edge can only be traversed at time step~$L+n+3$.
By the fact that each other edge incident with~$x_3$ exists only in time steps strictly greater than~$L+n+3$, there is no temporal path from~$v^*$ to~$x_2$ in~$\mg'$ that does not visit~$x_1$.

\textbf{Case 2:} $x_3$ is not contained in~$S$\textbf{.}
We show that there is no temporal path from~$x_2$ to~$v^*$ in~$\mg'$ that does not visit~$x_3$.
This then implies that~$x_2$ is not contained in~$S$.
Recall that~$\{x_1,x_2\}$ is the unique edge incident with~$x_2$ in~$\mg'$ that does not have~$x_3$ as an endpoint.
Moreover, recall that this edge can only be traversed at time step~$L+n+2$.
By the fact that each other edge incident with~$x_1$ exists only in time steps strictly smaller than~$L+n+2$, there is no temporal path from~$x_2$ to~$v^*$ in~$\mg'$ that does not visit~$x_3$.
}
\newcommand{\proofNoAuxVerts}{
Assume towards a contradiction that there is a temporal path~$P$ from~$v'$ to~$u'$ in~$\mg'$ that does not visit~$x_2$.
Since each edge incident with~$v'$ in~$\mg'$ only exists in time steps strictly larger than~$L$, $P$ cannot visit any vertex~$w\in V$.
This is due to the fact that~$\{w,w'\}$ is the only edge incident with~$w$ in~$\mg'$ that exists in any time step strictly larger than~$L$.

Since~$v'$ is only incident with the three edges~$\{v',v\}$, $\{v',x_1\}$, and $\{v',x_3\}$ in~$\mg'$, the above implies that~$P$ first traverses~$\{v',x_1\}$ or~$\{v',x_3\}$.
Similarly, the last edge traversed by~$P$ is either~$\{x_1,u'\}$ or~$\{x_3,u'\}$.
By construction, the edge~$\{v',x_3\}$ only exists at time step~$L + 2n + 3 + v$ which is strictly larger than the time steps in which~$\{x_1,u'\}$ or~$\{x_3,u'\}$ exists, namely time step~$L + 1 + u$ and time step~$L + 2n + 3 + u$, respectively.
This implies that~$P$ does not traverse the edge~$\{v',x_3\}$.
Similarly, since the edge~$\{v',x_1\}$ only exists at time step~$L + 1 + v$ which is strictly larger than the time steps in which~$\{x_1,u'\}$ exists, $P$ firstly traverses the edge~$\{v',x_1\}$ and lastly traverses the edge~$\{x_3,u'\}$.

Since~$P$ does not visit~$x_2$, for some vertex~$w\in V$, $P$ secondly traverses the edge~$\{x_1,w'\}$, and thirdly traverses the edge~$\{w',x_3\}$.
Moreover, since~$P$ is a temporal path in~$\mg'$, the edge~$\{x_1,w'\}$ exists in a time step at least as large as the time step in which~$\{v',x_1\}$ exists.
By definition of~$\mg'$, this implies that~$w > v$.
Since~$v > u$, this implies that the edge~$\{w',x_3\}$ exists only at time step~$L + n + 3 + w$, which is strictly larger than the time step in which the edge~$\{x_3, u'\}$ exists.
Hence,~$P$ is not a temporal path in~$\mg'$; a contradiction.
}
\newcommand{\proofOnlyOrigin}{
Recall that~$S$ contains at most one of~$x_1$ and~$x_3$.
Due to~\Cref{x2 not in s}, $S$ does not contain~$x_2$.
By~\Cref{no aux vertes reach}, this implies that~$S$ contains at most one vertex of~$\{w'\mid w\in V\}$.
Since~$k> 4$, we have that~$S$ contains at least two vertices~$u$ and~$v$ of~$V$.

First, we show that~$S$ contains neither vertex~$x_1$ nor vertex~$x_3$.
Assume towards a contradiction that~$S$ contains~$x_1$ or~$x_3$.
We distinguish two cases.

\textbf{Case 1:} $S$ contains~$x_1$\textbf{.}
Note that then~$x_3$ is not contained in~$S$.
Since~$S$ contains~$x_1$, $u$, and~$v$, for each~$w\in \{u,v\}$, there is a temporal path~$P_w$ in~$\mg'$ from~$x_1$ to~$w$ that only visits vertices of~$S$.
In particular, $P_w$ does not visit~$x_3$.
By construction, each edge incident with~$x_1$ in~$\mg'$ only exists in time steps strictly larger than~$L$, hence, $P_w$ does not traverse any edge between vertices of~$V$.
This implies that~$P_w$ is the unique temporal path that firstly traverses the edge~$\{x_1,w'\}$ and secondly traverses the edge~$\{w',w\}$.
This implies that~$u'$ and~$v'$ are both contained in~$S$.
A contradiction to the fact that~$S$ contains at most one vertex of~$\{w'\mid w\in V\}$.

\textbf{Case 2:} $S$ contains~$x_3$\textbf{.}
This case follows by the same arguments as the first case, when replacing~$x_1$ by~$x_3$ and vice versa.

Hence, we obtain that~$S$ contains none of the vertices of~$\{x_1,x_2,x_3\}$.
Hence, it remains to show that~$S$ contains no vertex of~$\{w'\mid w\in V\}$.
Assume towards a contradiction that there is a vertex~$w\in V$, such that~$w'$ is contained in~$S$.
Since~$u$ and~$v$ are distinct, assume without loss of generality that~$w \neq u$.
Since~$S$ contains both~$w'$ and~$u$, there is a temporal path~$P$ in~$\mg'$ from~$w'$ to~$u$ that visits neither~$x_1$ nor~$x_3$.
Recall that~$w'$ is only incident with the three edges~$\{w',w\}$, $\{w',x_1\}$, and $\{w',x_3\}$ in~$\mg'$.
Since~$P$ visits neither~$x_1$ nor~$x_3$, the first edge traversed by~$P$ is the edge~$\{w',w\}$ at a time step strictly larger than~$L$.
By construction~$\{w',w\}$ is the only edge incident with~$w$ in~$\mg'$ that exists in any time step strictly larger than~$L$, hence, $P$ cannot reach vertex~$u$, since~$u \neq w$.
A contradiction.

Concluding, $S$ contains only vertices of~$V$.
}
\begin{proof}
Let~$I:=(\mathcal{G},k)$ be an instance of~\CTCC with underlying graph~$G = (V,E)$ and let~$L$ be the lifetime of~$\mg$.
Moreover, assume for simplicity that the vertices of~$V$ are exactly the natural numbers from~$[1,n]$ with~$n:= |V|$.
To obtain an equivalent instance~$I':=(\mathcal{G}',k)$ of~\CTCC, we extends~$\mg$ as follows:
We initialize~$\mg'$ as a copy of~$\mg$ and for each vertex~$v\in V$, we add  a new vertex~$v'$ to~$\mg'$.
Additionally, we add three vertices~$x_1$, $x_2$, and~$x_3$ to~$\mg'$.
Furthermore, we append~$2n+4$ empty snapshots to the end of~$\mg'$ and add the following edges to~$\mg'$:
For each vertex~$v\in V$, we add the edge~$\{v,v'\}$ to time steps~$L+1$ and~$L + 2n + 4$, the edge~$\{v', x_1\}$ to time step~$L + 1 + v$, and the edge~$\{v',x_3\}$ to time step~$L + n + 3 + v$.
Finally, we add the edge~$\{x_1,x_2\}$ to time step~$L + n + 2$ and the edge~$\{x_2,x_3\}$ to time step~$L + n + 3$.
This completes the construction of~$\mg'$.
An illustration of the additional vertices and edges is given in~\Cref{figure}.

\begin{figure}
\centering
\scalebox{0.7}{
\begin{tikzpicture}
\tikzstyle{knoten}=[circle,fill=white,draw=black,minimum size=5pt,inner sep=1pt]

\node[knoten] (x1) at (0,0) {$x_1$};
\node[knoten] (x2) at ($(x1) + (2.25,.5)$) {$x_2$};
\node[knoten] (x3) at ($(x1) + (4.5,0)$) {$x_3$};

\node[knoten] (v1) at ($(x1) + (-2,-3)$) {$1'$};
\node[knoten] (v2) at ($(v1) + (2.5,0)$) {$2'$};
\node[knoten] (vn) at ($(v2) + (6,0)$) {$n'$};

\draw[thick] (x1) -- (x2) node [midway,above, sloped] {\footnotesize $L + n + 2$};
\draw[thick] (x2) -- (x3) node [midway,above, sloped] {\footnotesize $L + n + 3$};

\foreach \x [count=\y] in {1,2,n} {
    \draw[thick] (x3) to (v\x);
    
    \node[knoten] (w\x) at ($(v\x) + (0,-2)$) {$\x$};
    
    \ifthenelse{\y=1}{
    \draw[thick] (w\x) -- (v\x) node [left, sloped, near end, rotate=270] {\footnotesize $L+1, L + 2n + 4$};
    }{
    \draw[thick] (w\x) -- (v\x) node [right, sloped, near end, rotate=270] {\footnotesize $L+1, L + 2n + 4$};
    }   
}

    \draw[thick] (x1) -- (v1) node [above,sloped, near start, rotate=0] {\footnotesize $L + 2$};
    \draw[thick] (x1) -- (v2) node [above,sloped, near start, rotate=0] {\footnotesize $L + 3$};
    \draw[thick] (x1) -- (vn) node [above,sloped, very near start, rotate=0] {\footnotesize $~~~L + n + 1$};

    \draw[thick] (x3) -- (v1) node [above,sloped, very near start, rotate=0] {\footnotesize $L + n + 4~~$};
    \draw[thick] (x3) -- (v2) node [below,sloped, very near start, rotate=0] {\footnotesize $L + n + 5~~~$};
    \draw[thick] (x3) -- (vn) node [above,sloped, near start, rotate=0] {\footnotesize $L + 2n + 3$};

\node (mg) at ($(w1) + (-1,0)$) {$\mg$};

\node (dots) at ($.5*(w2) + .5*(wn)$) {\huge $\cdots$};

\begin{pgfonlayer}{background}

\draw [rounded corners, fill=blue!25] ($(w1) + (-.4, -.4)$) rectangle ($(wn) + (.4, .4)$) {};

\end{pgfonlayer}

\end{tikzpicture}
}

\caption{An illustration of the additional vertices and edges that are added to~$\mg$ in the reduction of~\Cref{ctcc is hard}.
Here, $L$ denotes the lifetime of~$\mg$ and the labels on the edges indicate in which snapshots the respective edges exist in the constructed temporal graph.}
\label{figure}
\end{figure}
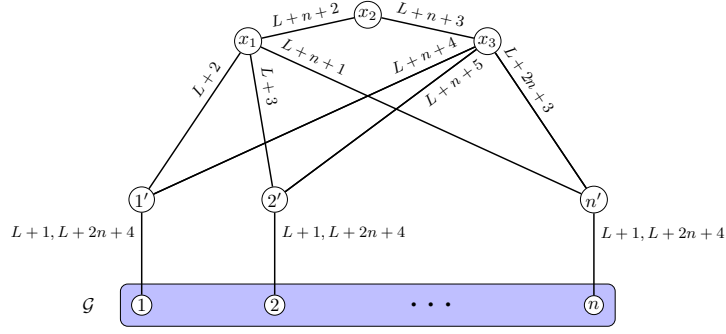

Before we show the equivalence between the two instances of~\CTCC, we first show that the reachability graph~$G_R'$ of~$\mg'$ only misses a single arc to be a bidirectional clique.

\begin{claim}\label{nearly complete}
The arc~$(x_3,x_1)$ is the only arc that is missing in~$G_R'$.
\end{claim}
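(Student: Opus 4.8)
The plan is to split the argument into two independent halves: first establishing that $(x_3,x_1)$ genuinely \emph{cannot} be an arc of $G_R'$, and then exhibiting explicit temporal paths witnessing \emph{every other} ordered pair of distinct vertices.

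For the negative direction I would reason purely from the time windows of the edges incident with $x_1$ and with $x_3$. Every edge touching $x_1$ — namely $\{v',x_1\}$ (present at time $L+1+v$) and $\{x_1,x_2\}$ (present at time $L+n+2$) — occurs at a time step at most $L+n+2$. Dually, every edge touching $x_3$ — namely $\{v',x_3\}$ (at $L+n+3+v$) and $\{x_2,x_3\}$ (at $L+n+3$) — occurs at a time step at least $L+n+3$. Hence any temporal path leaving $x_3$ must begin with an edge at time $\geq L+n+3$, whereas reaching $x_1$ would require as its final step an edge incident with $x_1$ at time $\leq L+n+2 < L+n+3$; since the labels along a temporal path are non-decreasing, this is impossible even in the non-strict setting. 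This rules out $(x_3,x_1)$.

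For the positive direction the key device is a single ``master'' strict temporal path, parameterised by two arbitrary vertices $u,v\in V$, that visits $u,\,u',\,x_1,\,x_2,\,x_3,\,v',\,v$ in this order by traversing the edges $\{u,u'\},\{u',x_1\},\{x_1,x_2\},\{x_2,x_3\},\{x_3,v'\},\{v',v\}$ at times $L+1,\,L+1+u,\,L+n+2,\,L+n+3,\,L+n+3+v,\,L+2n+4$, respectively. I would first check that, because $1\le u,v\le n$, these six labels are strictly increasing, so the master path is a strict (hence also non-strict) temporal path. Since every contiguous subpath of a temporal path is again a temporal path, each such subpath witnesses the arc from its first to its last vertex; letting $u$ and $v$ range over all of $V$ then yields every arc of $G_R'$ that respects the left-to-right order of this path. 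The only arcs among $x_1,x_2,x_3$ not produced this way are the ``backward'' ones $(x_2,x_1)$ and $(x_3,x_2)$, and these follow immediately because $\{x_1,x_2\}$ and $\{x_2,x_3\}$ are undirected edges, traversable in both directions.

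The main obstacle is not any single computation but the bookkeeping of the coverage argument: I must confirm that the master paths (over all $u,v$, with all their prefixes and suffixes) together with the two undirected back-edges among the $x_i$ account for \emph{every} ordered pair of distinct vertices except $(x_3,x_1)$. The cleanest organisation is by target vertex — verifying in turn that each vertex of $V$, each primed vertex $v'$, and each of $x_1,x_2,x_3$ receives an incoming arc from every other vertex — and observing that the sole ordered pair left uncovered is exactly $(x_3,x_1)$, which matches the impossibility established in the first half.
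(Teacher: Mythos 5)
Your proposal is correct and follows essentially the same route as the paper: the impossibility of $(x_3,x_1)$ is derived from the observation that every edge incident with $x_1$ has label at most $L+n+2$ while every edge incident with $x_3$ has label at least $L+n+3$, and all remaining arcs are witnessed by the same strictly increasing ``master'' path $u,u',x_1,x_2,x_3,v',v$ together with its contiguous subpaths and the bidirectional traversals of $\{x_1,x_2\}$ and $\{x_2,x_3\}$. Your explicit remark that contiguous subpaths (not only prefixes and suffixes) are needed, and the target-by-target coverage check, only make the bookkeeping slightly more careful than the paper's.
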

\begin{claimproof}
\proofNearlyComplete
\end{claimproof}

Next, we show that~$I$ is a yes-instance of~\CTCC if and only if~$I'$ is a yes-instance of~\CTCC.

$(\Rightarrow)$
This direction follows directly by the fact that~$\mg'$ is obtained by extending~$\mg$.
Hence, a closed tcc~$S$ of size~$k$ in~$\mg$ is also a closed tcc in~$\mg'$.

$(\Leftarrow)$
Let~$S$ be a closed tcc of size~$k$ in~$\mg'$.
Recall that~$k>4$.
We show that~$S$ only contains vertices of~$V$.
To this end, we first show that~$S$ does not contain~$x_2$.

\begin{claim}\iflong\else[$\star$]\fi\label{x2 not in s}
The set~$S$ does not contain~$x_2$.
\end{claim}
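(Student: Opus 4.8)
The plan is to argue by contradiction: I would assume $x_2\in S$ and show that $S$ cannot then be a closed tcc. The structural fact I would lean on is that $x_2$ is incident with exactly two edges in $\mg'$, namely $\{x_1,x_2\}$, present only at time $L+n+2$, and $\{x_2,x_3\}$, present only at time $L+n+3$. Thus every temporal path entering or leaving $x_2$ must pass through $x_1$ or $x_3$. Since $S$ is a closed tcc it is in particular a bidirectional clique in $G_R'$, so by \Cref{nearly complete} (which says $(x_3,x_1)$ is the only missing arc) $S$ cannot contain both $x_1$ and $x_3$; hence $S$ omits at least one of them. Because $k>4$, I would also fix a vertex $v^*\in S\setminus\{x_1,x_2,x_3\}$.

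I would then split into two symmetric cases. Suppose first $x_1\notin S$. Closedness forces a temporal path from $v^*$ to $x_2$ using only vertices of $S$, so this path avoids $x_1$; its final edge into $x_2$ must therefore be $\{x_2,x_3\}$ at time $L+n+3$, coming from $x_3$. But every edge incident with $x_3$ other than $\{x_2,x_3\}$ is some $\{v',x_3\}$ occurring at time $L+n+3+v\ge L+n+4$, so the only edge at $x_3$ available at time at most $L+n+3$ is $\{x_2,x_3\}$ itself. Reaching $x_3$ in time would thus require arriving from $x_2$, which is impossible since temporal paths visit distinct vertices and $x_2$ is the path's endpoint. No such path exists, contradicting closedness, so $x_2\notin S$. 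The case $x_3\notin S$ is dual: a path from $x_2$ to $v^*$ inside $S$ must leave $x_2$ via $\{x_1,x_2\}$ at time $L+n+2$ into $x_1$, but every other edge at $x_1$ is a $\{v',x_1\}$ at time $L+1+v\le L+n+1<L+n+2$, so the path stalls at $x_1$ and never reaches $v^*$; again a contradiction.

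The main point to handle with care is that the conclusion must hold for all four versions of \CTCC at once, and that it is precisely the closedness of $S$---the confinement of the realizing paths to $S$---that lets me forbid the absent vertex among $x_1,x_3$ as a waypoint. Both obstructions are purely about time windows (all edges at $x_3$ occur at time at least $L+n+3$, while all non-$x_2$ edges at $x_1$ occur at time at most $L+n+1$), so the bottleneck is insensitive to whether consecutive labels must strictly increase or merely not decrease; I would only double-check the distinct-vertex property of temporal paths to rule out escaping the contradiction by revisiting $x_2$. I do not expect a genuine obstacle here: the single missing arc $(x_3,x_1)$ from \Cref{nearly complete} makes the two-case split exhaustive, and each case reduces to a short time-window argument.
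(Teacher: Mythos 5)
Your proof is correct and follows essentially the same route as the paper's: both use $k>4$ to fix a vertex $v^*\in S\setminus\{x_1,x_2,x_3\}$, invoke \Cref{nearly complete} to conclude $S$ misses one of $x_1,x_3$, and then in each case rule out the required internal temporal path to or from $x_2$ by the same time-window argument on the edges at $x_3$ (all at time $\ge L+n+3$) and at $x_1$ (all non-$x_2$ edges at time $\le L+n+1$). No gaps.
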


\iflong
\begin{claimproof}
\proofXNotInS
\end{claimproof}
\fi

Next, we show that the vertex~$x_2$ is required to have pairwise temporal paths between distinct vertices of~$\{w'\mid w\in V\}$ in~$\mg'$.

\begin{claim}\iflong\else[$\star$]\fi\label{no aux vertes reach}
Let~$u$ and~$v$ be distinct vertices of~$V$ with~$u < v$.
Then, each temporal path from~$v'$ to~$u'$ in~$\mg'$ visits~$x_2$. 
\end{claim}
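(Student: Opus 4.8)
The plan is to argue by contradiction: assume some temporal path~$P$ from~$v'$ to~$u'$ avoids~$x_2$, and derive a contradiction from the hypothesis~$u<v$. The crucial first step is to confine~$P$ to the part appended during the construction. Every edge incident with~$v'$ is present only at a time step strictly greater than~$L$, so the first edge of~$P$ has a label larger than~$L$; since the labels along a temporal path are non-decreasing, \emph{every} edge of~$P$ has a label larger than~$L$ and thus was added during the construction. Now consider any original vertex~$w\in V$: the only edge incident with~$w$ present after time~$L$ is the pendant edge~$\{w,w'\}$, so~$w$ has a unique neighbour in the subgraph of late edges and can occur only as an endpoint of a path living there. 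As the endpoints of~$P$ are~$v'$ and~$u'$, I conclude that~$P$ visits no vertex of~$V$ at all; in particular its first edge is~$\{v',x_1\}$ or~$\{v',x_3\}$ and its last edge is~$\{x_1,u'\}$ or~$\{x_3,u'\}$.

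Consequently~$P$ lives on~$\{x_1,x_3\}\cup\{w'\mid w\in V\}$, and the only edges available are the ``spokes''~$\{w',x_1\}$ (present at time~$L+1+w$) and~$\{w',x_3\}$ (present at time~$L+n+3+w$); there is no edge between two primed vertices, and the only~$x_1$--$x_3$ connection runs through the forbidden~$x_2$. Hence the relevant graph is bipartite with hubs~$x_1,x_3$ on one side and the primed vertices on the other, and since a path uses each hub at most once it contains at most one internal primed vertex. This confines~$P$ to one of four shapes: the two short paths visiting~$v',x_1,u'$ or~$v',x_3,u'$, and the two longer paths visiting~$v',x_1,w',x_3,u'$ or~$v',x_3,w',x_1,u'$ for some~$w\in V$.

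It then remains to read off the labels of each shape and contradict~$u<v$. The path~$v',x_1,u'$ uses labels~$L+1+v$ then~$L+1+u$ and so forces~$v<u$; symmetrically~$v',x_3,u'$ forces~$v<u$ via~$L+n+3+v$ and~$L+n+3+u$. The shape~$v',x_3,w',x_1,u'$ already fails on its middle two edges, since a spoke at~$x_3$ (label~$\ge L+n+4$) can never be followed by a spoke at~$x_1$ (label~$\le L+n+1$). Finally~$v',x_1,w',x_3,u'$ has increasing labels~$L+1+v<L+1+w$ and~$L+n+3+w<L+n+3+u$, i.e.~$v<w$ and~$w<u$, whence~$v<w<u$, again contradicting~$u<v$. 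Every admissible shape is thereby excluded, so no~$x_2$-avoiding path from~$v'$ to~$u'$ exists.

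Two observations make the single argument cover all settings. All edges of the confined spoke graph carry pairwise distinct time labels, so strict and non-strict temporal paths coincide on it, and the directed version is handled symmetrically. The step I expect to be the real obstacle is the confinement in the first paragraph: one must rule out that~$P$ escapes into the arbitrary original graph~$\mg$ or re-enters the appendix through a pendant edge, which is exactly what the appended empty snapshots (placing all new edges after time~$L$) and the degree-one structure of each~$w\in V$ in the late subgraph guarantee. Once~$P$ is trapped in the bipartite spoke graph, the remaining case analysis is routine label arithmetic.
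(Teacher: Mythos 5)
Your proof is correct and follows essentially the same route as the paper's: both confine the hypothetical $x_2$-avoiding path to the appended gadget by observing that every edge incident with $v'$ has label greater than $L$ while each original vertex $w\in V$ has only the pendant edge $\{w,w'\}$ after time $L$, and both then derive a contradiction with $u<v$ from the labels of the spokes at $x_1$ and $x_3$. The only cosmetic difference is that you enumerate the four possible path shapes up front, whereas the paper eliminates the first and last edges sequentially before handling the middle.
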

\iflong
\begin{claimproof}
\proofNoAuxVerts
\end{claimproof}
\fi

As a consequence,~$S$ contains at most one vertex of~$\{w'\mid w\in V\}$, since~$S$ is a closed tcc that does not contain vertex~$x_2$. 
Based on the above two claims, we now show that~$S$ contains only vertices of~$V$.

\begin{claim}\iflong\else[$\star$]\fi\label{only original vertices}
Only vertices of~$V$ are contained in~$S$.
\end{claim}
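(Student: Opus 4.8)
The plan is to show that each ``auxiliary'' vertex---$x_1$, $x_2$, $x_3$, and every copy $w'$---is excluded from the closed tcc~$S$. First I would collect the constraints already proved: $x_2\notin S$ by~\Cref{x2 not in s}; $S$ contains at most one of $x_1,x_3$, since $(x_3,x_1)$ is the unique missing arc of $G_R'$ (\Cref{nearly complete}) and a tcc would require $x_3\leadsto x_1$; and $S$ contains at most one vertex of $\{w'\mid w\in V\}$, because two such vertices would force an internal path through $x_2$ by~\Cref{no aux vertes reach}, which is impossible inside $S$. Since $k>4$, deleting these at most two non-$V$ vertices still leaves at least three vertices of~$V$ in~$S$; I fix two distinct ones $u,v\in S\cap V$.

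To rule out $x_1$ and $x_3$, I would assume $x_1\in S$ (hence $x_3\notin S$) and, for each $w\in\{u,v\}$, examine an internal temporal path $P_w$ from $x_1$ to~$w$. The crucial point is that every edge incident with $x_1$ appears only at times strictly larger than~$L$, so $P_w$ starts after time~$L$ and can therefore never traverse an edge of the original~$\mg$, all of which live at times $\le L$. Tracing the admissible moves, $P_w$ must first take $\{x_1,z'\}$ into some copy~$z'$; as $x_1$ is used and $x_3\notin S$, it then takes $\{z',z\}$ into $z\in V$, necessarily at the late time $L+2n+4$ (the edge at $L+1$ is too early), after which no edge survives. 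Hence $z=w$ and $w'\in S$. Applying this to both $u$ and~$v$ places two distinct copies in~$S$, contradicting the at-most-one bound. The case $x_3\in S$ is symmetric under the time-reversal swapping $x_1\leftrightarrow x_3$, now using internal paths from $x_3$ to $u$ and~$v$.

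It remains to forbid the copies themselves. Supposing $w'\in S$, I would choose $u\in S\cap V$ with $u\ne w$ (which exists since $|S\cap V|\ge 3$) and consider an internal path $P$ from $w'$ to~$u$. As $w'$ is incident only with $w$, $x_1$, and $x_3$, and the latter two lie outside~$S$, $P$ must begin with $\{w',w\}$, reaching~$w$ at time $L+1$ or $L+2n+4$; from~$w$ the only remaining edges are the original ones at times $\le L$ (too early) and $\{w,w'\}$ (returning to~$w'$), so $P$ is stuck at~$w$ and cannot reach $u\ne w$, a contradiction. Combining the three steps with $x_2\notin S$ gives $S\subseteq V$.

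The main obstacle I anticipate is the time-label bookkeeping that makes each ``forced path'' argument go through: one must check precisely that, once a path is pinned to start (or end) at an $x_i$ at a time exceeding~$L$, the only surviving continuation is the two-edge detour $x_i\to z'\to z$ completed at time $L+2n+4$, and that leaving a $V$-vertex after time~$L$ is impossible. Getting the inequalities among $L+1$, $L+1+v$, $L+n+3+v$, and $L+2n+4$ exactly right---and invoking the time-reversal symmetry cleanly for the $x_3$ case instead of redoing the estimates---is where the real care lies.
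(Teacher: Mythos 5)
Your proposal is correct and follows essentially the same route as the paper's proof: first bounding the number of non-$V$ vertices in~$S$ using \Cref{nearly complete}, \Cref{x2 not in s}, and \Cref{no aux vertes reach}, then excluding $x_1$ (resp.\ $x_3$) by showing that internal paths from it to two distinct vertices of~$S\cap V$ would force two distinct copies $u',v'$ into~$S$, and finally excluding any copy~$w'$ by the stuck-at-$w$ argument. The only cosmetic differences are your sharper count $|S\cap V|\ge 3$ (the paper only needs two) and the ``time-reversal'' phrasing for the $x_3$ case, where the argument you actually invoke (paths \emph{from} $x_3$) is exactly the paper's symmetric case.
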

\iflong
\begin{claimproof}
\proofOnlyOrigin
\end{claimproof}
\fi

Since no edge between any two vertices of~$V$ was added while constructing~$\mg'$ from~$\mg$, each temporal path in~$\mg'$ that visits only vertices of~$V$ is also a temporal path in~$\mg$.
Together with~\Cref{only original vertices}, this implies that~$S$ is a closed tcc in~$\mg$, which implies that~$I$ is a yes-instance of~\CTCC.
\end{proof}

Recall that all versions of \CTCC are \NP-hard~\cite{Cast18,CLMS23}.
Moreover the strict undirected version of \CTCC is \W-hard when parameterized by~$k$~\cite{Cast18} and both directed versions of \CTCC are \W-hard when parameterized by~$k$~\cite{CLMS23}.
Together with~\Cref{ctcc is hard}, this implies the following intractability results for~\CTCC.

\begin{theorem}
All versions of \CTCC are \NP-hard even if $\tvd = \tae = 1$.
	More precisely, this hardness holds on instances where the reachability graph is missing only a single arc to be a complete bidirectional clique. 
	Excluding the undirected non-strict version of~\CTCC, all versions of \CTCC are \W-hard when parameterized by~$k$ under these restrictions
\end{theorem}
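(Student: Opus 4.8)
The plan is to derive this entirely by composing the self-reduction of \Cref{ctcc is hard} with the known unconditional and parameterized hardness results for \CTCC, using the fact that the self-reduction simultaneously preserves the parameter $k$ and pins the distance-to-transitivity parameters at $1$. The first and key step is to confirm that any instance whose reachability graph misses exactly one arc from being a complete bidirectional clique satisfies $\tvd = \tae = 1$. Writing $(x_3,x_1)$ for the single missing arc, the graph is not transitive, since $(x_3,x_2)$ and $(x_2,x_1)$ are present while $(x_3,x_1)$ is absent; hence both parameters are at least $1$. For the matching upper bounds, adding the single arc $(x_3,x_1)$ produces a complete bidirectional clique, which is transitive, so $\tae \leq 1$; and deleting either endpoint $x_1$ or $x_3$ leaves a complete bidirectional clique on the remaining vertices, again transitive, so $\tvd \leq 1$. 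Thus the instances output by \Cref{ctcc is hard} have $\tvd = \tae = 1$ exactly, which is precisely the restriction in the statement.

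Next I would transfer \NP-hardness. Since all versions of \CTCC are \NP-hard on general temporal graphs~\cite{Cast18,CLMS23}, and the self-reduction requires $k > 4$, I would first discharge this precondition: an instance with $k \leq 4$ is solvable in polynomial time by testing each of the $\Oh(n^4)$ candidate vertex sets of the appropriate size and checking the closed-reachability condition for every internal pair directly. Hence the \NP-hard instances may be assumed to have $k > 4$, and applying the polynomial-time self-reduction yields equivalent instances with $\tvd = \tae = 1$, establishing \NP-hardness under these restrictions for every version of \CTCC.

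For the \W-hardness claims I would exploit that the self-reduction leaves the parameter unchanged ($k' = k$), so it is a parameterized reduction. The strict undirected version of \CTCC is \W-hard parameterized by $k$~\cite{Cast18}, and both directed versions are \W-hard parameterized by $k$~\cite{CLMS23}. Composing each of these with the self-reduction—again assuming $k > 4$, which is without loss of generality by the same brute-force argument as above—transfers \W-hardness to instances whose reachability graph is a complete bidirectional clique minus a single arc, i.e.\ with $\tvd = \tae = 1$. The undirected non-strict version is excluded precisely because no \W-hardness result parameterized by $k$ is currently known for it.

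The main obstacle is bookkeeping rather than depth: I must make sure the self-reduction of \Cref{ctcc is hard} genuinely acts within each fixed version (strict/non-strict, directed/undirected), so that every cited hardness result is composed with a reduction that stays inside the same version, and that the $k > 4$ precondition is discharged uniformly for both the \NP-hardness and \W-hardness directions. Once these are in place, the entire theorem reduces to the single-arc-missing observation of the first step.
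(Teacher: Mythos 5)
Your proposal takes the same route as the paper: the paper's proof of this theorem is exactly the composition of the self-reduction of \Cref{ctcc is hard} with the known \NP- and \W-hardness results for \CTCC from~\cite{Cast18,CLMS23}, together with the (implicit) observation that a reachability graph missing a single arc from a complete bidirectional clique has $\tvd=\tae=1$; your verification of that observation is correct and your handling of the parameter preservation matches what the paper leaves tacit.

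One detail of your write-up does not work as stated: the discharge of the $k>4$ precondition by brute force. Testing all $\Oh(n^4)$ vertex sets of size exactly $k\le 4$ decides whether a closed tcc of size \emph{exactly} $k$ exists, but the problem asks for size \emph{at least} $k$, and closed tccs are not closed under taking subsets (the witnessing temporal paths for a pair $u,v$ may pass through other vertices of the component, so a large closed tcc need not contain a small one). Hence this brute force does not solve the $k\le 4$ case of \CTCC. The standard and easy repair is to push the assumption back to the source problem: the cited hardness reductions start from \CL (or similar), whose instances with $k\le 4$ are polynomial-time solvable, so one may assume without loss of generality that the produced \CTCC instances already satisfy $k>4$ before applying the self-reduction; this preserves both the \NP-hardness and the parameterized reductions. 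With that substitution your argument is complete and coincides with the paper's.
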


\section{Conclusion}
\label{sec:conclusion}
We introduced two new parameters $\tvd$ and $\tae$ that capture how far the reachability graph of a given temporal graph is from being transitive. 
We demonstrated their applicability when the goal is to find open tccs in a temporal graph, presenting \FPT-algorithms for each parameter individually, and a polynomial kernel with respect to~$\tae$, assuming that the corresponding arc-modification set of size $\tae$ is given.
Computing such a set is \NP-hard in general directed graphs~\cite{WKNU12}. 
An interesting question, also formulated in that paper, is whether this parameter is at least approximable to within a polynomial factor of $\tae$. 
If so, our result implies a polynomial kernel for~\OTCC when parameterized by~$\tae$.
Alternatively, the existence of a proper polynomial kernel for~\OTCC when parameterized by~$\tae$ could also be shown by finding an approximation for a minimum-size inherit transitivity modulator due to~\Cref{kernel if set given} and the fact that the size of a minimum-size inherit transitivity modulator never exceeds~$2\cdot \tae$.

Another natural question is to identify what are the other (temporal) reachability problems for which our transitivity parameters could be useful. For instance, consider a variant of the \OTCC problem where we search for \emph{$d$-tccs}, that is, tccs such that the fastest temporal path between the vertices has duration at most~$d$.
It is plausible that our positive results carry over to this version when applied to the~\emph{$d$-reachability graph}, i.e., the graph whose arcs represent temporal paths of duration at most~$d$.

Regarding \CTCC, our intractability results show that neither \tvd nor \tae suffice to make this problem tractable. However, our results do not preclude the existence of an \FPT-algorithm in the case that the arc modification operations are restricted to deletion only, which remains to be investigated. Nonetheless, we still believe that transitivity is a key aspect of the problem. The problem with \CTCC is that the reachability graph itself does not encode whether the paths responsible for reachability travel through internal or external vertices.

We would like to initiate the idea of considering further transitivity parameters based on modifications of the temporal graph itself, not only of the reachability graph. In particular, could \FPT-algorithms for such parameters be achieved for reachability problems such as \OTCC with similar performance as our \FPT-algorithms for \tvd and \tae, and could these parameters make \CTCC tractable as well? And if so, how difficult is the computation of such parameters?

\newpage

\bibliography{bib}

\begin{thebibliography}{10}

\bibitem{ACG+19}
Eleni~C Akrida, Jurek Czyzowicz, Leszek Gasieniec, {\L}ukasz Kuszner, and
  Paul~G Spirakis.
\newblock Temporal flows in temporal networks.
\newblock {\em Journal of Computer and System Sciences}, 103:46--60, 2019.

\bibitem{AGMS17}
Eleni~C Akrida, Leszek G{\k{a}}sieniec, George~B Mertzios, and Paul~G Spirakis.
\newblock The complexity of optimal design of temporally connected graphs.
\newblock {\em Theory of Computing Systems}, 61:907--944, 2017.

\bibitem{AMSR21}
Eleni~C Akrida, George~B Mertzios, Paul~G Spirakis, and Christoforos
  Raptopoulos.
\newblock The temporal explorer who returns to the base.
\newblock {\em Journal of Computer and System Sciences}, 120:179--193, 2021.

\bibitem{AFGW23}
Emmanuel Arrighi, Fedor~V. Fomin, Petr~A. Golovach, and Petra Wolf.
\newblock Kernelizing temporal exploration problems.
\newblock In Neeldhara Misra and Magnus Wahlstr{\"{o}}m, editors, {\em 18th
  International Symposium on Parameterized and Exact Computation, {IPEC} 2023,
  September 6-8, 2023, Amsterdam, The Netherlands}, volume 285 of {\em LIPIcs},
  pages 1:1--1:18. Schloss Dagstuhl - Leibniz-Zentrum f{\"{u}}r Informatik,
  2023.
\newblock URL: \url{https://doi.org/10.4230/LIPIcs.IPEC.2023.1}, \href
  {https://doi.org/10.4230/LIPICS.IPEC.2023.1}
  {\path{doi:10.4230/LIPICS.IPEC.2023.1}}.

\bibitem{AF16}
Kyriakos Axiotis and Dimitris Fotakis.
\newblock On the size and the approximability of minimum temporally connected
  subgraphs.
\newblock In {\em 43rd International Colloquium on Automata, Languages, and
  Programming (ICALP 2016)}. Schloss Dagstuhl-Leibniz-Zentrum fuer Informatik,
  2016.

\bibitem{BF03}
Sandeep Bhadra and Afonso Ferreira.
\newblock Complexity of connected components in evolving graphs and the
  computation of multicast trees in dynamic networks.
\newblock In {\em Ad-Hoc, Mobile, and Wireless Networks: Second International
  Conference, ADHOC-NOW2003, Montreal, Canada, October 8-10, 2003. Proceedings
  2}, pages 259--270. Springer, 2003.

\bibitem{BM23}
Benjamin~Merlin Bumpus and Kitty Meeks.
\newblock Edge exploration of temporal graphs.
\newblock {\em Algorithmica}, 85(3):688--716, 2023.

\bibitem{Cast18}
Arnaud Casteigts.
\newblock Finding structure in dynamic networks.
\newblock {\em CoRR}, abs/1807.07801, 2018.
\newblock URL: \url{http://arxiv.org/abs/1807.07801}, \href
  {http://arxiv.org/abs/1807.07801} {\path{arXiv:1807.07801}}.

\bibitem{CC24}
Arnaud Casteigts and Timoth{\'e}e Corsini.
\newblock In search of the lost tree: Hardness and relaxation of spanning trees
  in temporal graphs.
\newblock In {\em 31th Colloquium on Structural Information and Communication
  Complexity (SIROCCO)}, 2024.

\bibitem{CHMZ21}
Arnaud Casteigts, Anne-Sophie Himmel, Hendrik Molter, and Philipp Zschoche.
\newblock Finding temporal paths under waiting time constraints.
\newblock {\em Algorithmica}, 83(9):2754--2802, 2021.

\bibitem{CMMZ21}
Arnaud Casteigts, Michael Raskin, Malte Renken, and Viktor Zamaraev.
\newblock Sharp thresholds in random simple temporal graphs.
\newblock In {\em 2021 IEEE 62nd Annual Symposium on Foundations of Computer
  Science (FOCS)}, pages 319--326. IEEE Computer Society, 2022.

\bibitem{C+05}
Jianer Chen, Benny Chor, Mike Fellows, Xiuzhen Huang, David~W. Juedes, Iyad~A.
  Kanj, and Ge~Xia.
\newblock {Tight lower bounds for certain parameterized \mbox{NP}-hard
  problems}.
\newblock {\em {Information and Computation}}, {201}({2}):{216--231}, {2005}.

\bibitem{CLMS23}
Isnard~Lopes Costa, Raul Lopes, Andrea Marino, and Ana Silva.
\newblock {On Computing Large Temporal (Unilateral) Connected Components}.
\newblock In {\em Combinatorial Algorithms - 34th International Workshop,
  {IWOCA} 2023, Tainan, Taiwan, June 7-10, 2023, Proceedings}, volume 13889 of
  {\em Lecture Notes in Computer Science}, pages 282--293. Springer, 2023.
\newblock \href {https://doi.org/10.1007/978-3-031-34347-6\_24}
  {\path{doi:10.1007/978-3-031-34347-6\_24}}.

\bibitem{C+15}
Marek Cygan, Fedor~V. Fomin, Lukasz Kowalik, Daniel Lokshtanov, D{\'{a}}niel
  Marx, Marcin Pilipczuk, Michal Pilipczuk, and Saket Saurabh.
\newblock {\em {Parameterized Algorithms}}.
\newblock {Springer}, {2015}.

\bibitem{DF13}
Rodney~G. Downey and Michael~R. Fellows.
\newblock {\em {Fundamentals of Parameterized Complexity}}.
\newblock {Texts in Computer Science}. {Springer}, 2013.

\bibitem{EMMZ19}
Jessica Enright, Kitty Meeks, George~B Mertzios, and Viktor Zamaraev.
\newblock Deleting edges to restrict the size of an epidemic in temporal
  networks.
\newblock In {\em 44th International Symposium on Mathematical Foundations of
  Computer Science (MFCS 2019)}. Schloss Dagstuhl-Leibniz-Zentrum fuer
  Informatik, 2019.

\bibitem{ES22}
Thomas Erlebach and Jakob~T Spooner.
\newblock Parameterized temporal exploration problems.
\newblock In {\em 1st Symposium on Algorithmic Foundations of Dynamic Networks
  (SAND 2022)}. Schloss Dagstuhl-Leibniz-Zentrum f{\"u}r Informatik, 2022.

\bibitem{treewidth}
Till Fluschnik, Hendrik Molter, Rolf Niedermeier, Malte Renken, and Philipp
  Zschoche.
\newblock As time goes by: reflections on treewidth for temporal graphs.
\newblock In {\em Treewidth, Kernels, and Algorithms: Essays Dedicated to Hans
  L. Bodlaender on the Occasion of His 60th Birthday}, pages 49--77. Springer,
  2020.

\bibitem{FMN+20}
Till Fluschnik, Hendrik Molter, Rolf Niedermeier, Malte Renken, and Philipp
  Zschoche.
\newblock Temporal graph classes: A view through temporal separators.
\newblock {\em Theoretical Computer Science}, 806:197--218, 2020.

\bibitem{GK20}
Niels Gr{\"{u}}ttemeier and Christian Komusiewicz.
\newblock {On the Relation of Strong Triadic Closure and Cluster Deletion}.
\newblock {\em {Algorithmica}}, {82}({4}):{853--880}, {2020}.

\bibitem{KKK00}
David Kempe, Jon Kleinberg, and Amit Kumar.
\newblock Connectivity and inference problems for temporal networks.
\newblock In {\em Proceedings of the thirty-second annual ACM symposium on
  Theory of computing}, pages 504--513, 2000.

\bibitem{DBLP:journals/jcss/LewisY80}
John~M. Lewis and Mihalis Yannakakis.
\newblock The node-deletion problem for hereditary properties is np-complete.
\newblock {\em Journal of Computer and System Sciences}, 20(2):219--230, 1980.
\newblock \href {https://doi.org/10.1016/0022-0000(80)90060-4}
  {\path{doi:10.1016/0022-0000(80)90060-4}}.

\bibitem{SSBP24}
Jason Schoeters, Eric Sanlaville, Stefan Balev, and Yoann Pign{\'e}.
\newblock Temporally connected components.
\newblock {\em Available at SSRN 4590651}, 2024.

\bibitem{T72}
Robert~Endre Tarjan.
\newblock Depth-first search and linear graph algorithms.
\newblock {\em SIAM Journal on Computing}, 1(2):146--160, 1972.
\newblock \href {https://doi.org/10.1137/0201010} {\path{doi:10.1137/0201010}}.

\bibitem{WKNU12}
Mathias Weller, Christian Komusiewicz, Rolf Niedermeier, and Johannes Uhlmann.
\newblock On making directed graphs transitive.
\newblock {\em Journal of Computer and System Sciences}, 78(2):559--574, 2012.
\newblock URL: \url{https://doi.org/10.1016/j.jcss.2011.07.001}, \href
  {https://doi.org/10.1016/J.JCSS.2011.07.001}
  {\path{doi:10.1016/J.JCSS.2011.07.001}}.

\end{thebibliography}

\end{document}